\newtheorem{theorem}{Theorem}
\newtheorem{definition}{Definition}
\newcommand{\cprod}[2]{\overset{\curvearrowleft}{\prod^{#2}_{#1}}}
\newcommand{\oprod}[2]{\overset{\curvearrowright}{\prod^{#2}_{#1}}}
\newcommand{\mat}[1]{\left(\begin{matrix}#1\end{matrix}\right)}
\newcounter{mysfig}
\renewcommand\themysfig{\thefigure(\alph{mysfig})}
\newcommand\Scaption[1]{%
	\refstepcounter{mysfig}%
	\vskip.5\abovecaptionskip
	\sbox\@tempboxa{\small\themysfig~#1}%
	\ifdim \wd\@tempboxa >\hsize
	\small\themysfig~#1\par
	\else
	\global \@minipagefalse
	\hb@xt@\hsize{\hfil\box\@tempboxa\hfil}%
	\fi
	\vskip\belowcaptionskip}
\definecolor{darkgreen}{rgb}{0,0.7,0}
\begin{document}
	
	\title{A quantum walk inspired model for distributed computing on arbitrary graphs}
	
	\author*[]{\fnm{Mathieu} \sur{Roget}}\email{mathieu.roget@lis-lab.fr}
	
	\author[]{\fnm{Giuseppe} \sur{Di Molfetta}}\email{giuseppe.dimolfetta@lis-lab.fr}
	
	\affil[]{\orgname{Aix-Marseille Université, Université de Toulon, CNRS, LIS}, \city{Marseille}, \state{France}}

	\abstract{A discrete time quantum walk is known to be the single-particle sector of a quantum cellular automaton. For a long time, these models have interested the community for their nice properties such as locality or translation invariance. This work introduces a model of distributed computation for arbitrary graphs inspired by quantum cellular automata. As a by-product, we show how this model can reproduce the dynamic of a quantum walk on graphs. In this context, we investigate the communication cost for two interaction schemes. Finally, we explain how this particular quantum walk can be applied to solve the search problem and present numerical results on different types of topologies.}

	\keywords{Quantum Distributed Algorithm, Quantum Walk, Quantum Cellular Automata, Quantum Anonymous Network, Searching Algorithm}
	
	\maketitle
	
	\section*{Introduction}\label{sec:intro}
	Quantum Walks~(QW), from a mathematical point of view, coincide with the single-particle sector of quantum cellular automata (QCA), namely a spatially distributed network of local quantum gates. Usually defined on a $d$-dimensional grid of cell, they are known to be capable of universal computation~\cite{arrighi2019overview}. Both quantum walks and quantum cellular automata, with their beginnings in digital simulations of fundamental physics~\cite{di2013quantum, bisio2016quantum}, come into their own in algorithmic search and optimization applications~\cite{santha2008quantum}. Searching using QW has been extensively studied in the past decades, with a wide range of applications, including optimization~\cite{slate2021quantum} and machine learning~\cite{melnikov2019predicting}. On the other hand, search algorithms as a field independent of quantum walks have recently been used, as subroutines, to solve distributed computational tasks~\cite{gall2018quantum, le2018sublinear, izumi2019quantum}. However, in all these examples, some global information is supposed to be known by each node of the network, such as the size, and usually the network is not anonymous, namely every node has a unique label. Note that, the absence of anonymity and the quantum properties of the algorithm successfully solves the incalculability of certain problems such as the leader election problem. However, encoding global information within a quantum state is generally problematic. In order to address such issue, here we first introduce a new QW-based scheme on arbitrary graphs for rephrasing search algorithms. Then we move to the multi-qubits generalization, which successfully implements a QCA-based distributed anonymous protocol for searching problems, avoiding any use of global information.
	
	\paragraph{Contribution}
	Section \ref{sec:model} of this work introduces a Quantum Walk model well suited to indifferently search for a node or edge in arbitrary graphs. In this model the walker's amplitudes are defined onto the graph's edges and ensure 2-dimensional coin everywhere. We detail how this Quantum Walk can be used to search a node or an edge and we show examples of this Quantum Walk on several graphs. In this first part, the Quantum Walk is introduced formally as a discrete dynamical system. In Section \ref{sec:distrib} we move to the multi-particle sector, allowing many quantum states-dynamics over the network, based on the previous model, and leading to two distributed searching protocols. We consider here two families of interaction graphs: all-to-all and cyclic. The implementations proposed conserves the graph locality and does not require a node or an edge to have global information such as the graph size. The nodes (and edges) do not have a unique label, and no leader is needed. Finally, Section \ref{sec:application} shows how to apply our model to the search of edges or nodes. Numerical experiments for grids, hypercubes, complete graphs, and random scale-free graphs are provided.
	
	\section{Model of Quantum Walk on graphs}\label{sec:model}

This section introduces our model of quantum walk on graphs and compare it to other similar models.

\subsection{The model}
We consider an undirected connected graph $G = (V,E)$, where $V$ is the set of vertices and $E$ the set of edges. We define the walker's position on the graph's edges and a coin register of dimension two (either $+$ or $-$). The full state of the walker at step $t$ is noted~:
$$
\ket{\Phi_t} = \sum_{(u,v)\in E} \psi_{u,v}^+\ket{(u,v)}\ket{+} + \psi_{u,v}^-\ket{(u,v)}\ket{-}.
$$

The graph is undirected so we indifferently use $\ket{(u,v)}$ and $\ket{(v,u)}$ to name the same state. Similarly for the complex amplitudes $ \psi_{u,v}^+(t) =  \psi_{v,u}^-(t)$. We also introduce a polarity for every edge of $G$. For each of them, we have a polarity function $\sigma$, such that~:
$
\forall (u,v) \in E, \; \sigma(u,v) \in \{+,-\} \text{ and } \sigma(u,v) \neq \sigma(v,u).
$
Figure \ref{fig:tikzedge} illustrates how amplitudes and polarity are placed with edge $(u,v)$ of polarity $\sigma(u,v) = +$. 
\begin{figure}[h]
	\centering
	\subfloat[$\sigma(u,v) = +$]{\includegraphics[width=0.4\linewidth]{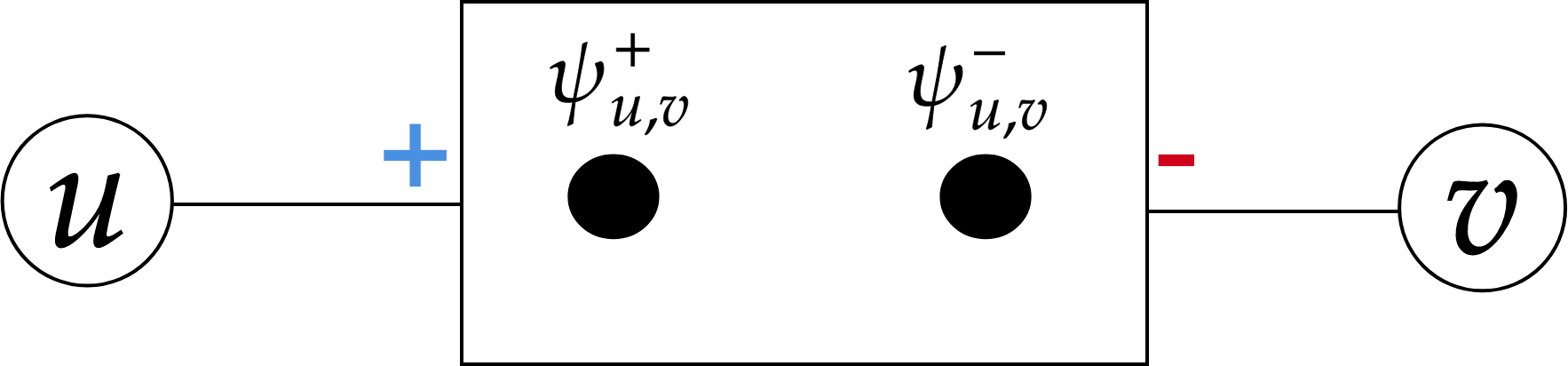}}\qquad
	\subfloat[$\sigma(u,v) = -$]{\includegraphics[width=0.4\linewidth]{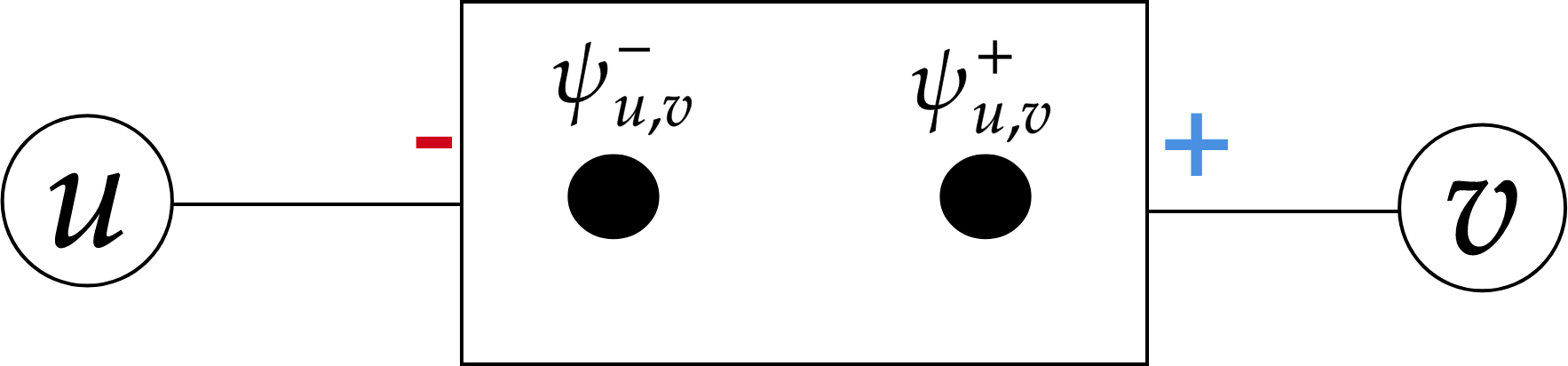}}\\
	\caption{The edge $(u,v)$ and how are placed the amplitudes for the two possible polarities.}
	\label{fig:tikzedge}
\end{figure}
The polarity is a necessary and arbitrary choice made at the algorithm's initialization \textit{independently for every edge}. We discuss why it is necessary and a way to make that choice at the end of this section. 
The full unitary evolution of the walk reads~: $\ket{\Phi_{t+1}} = S \times (I\otimes C) \times \ket{\Phi_{t}}$, where $C$ is the local coin operation acting on the coin register~: 
$$
\forall (u,v)\in E, \; \ket{(u,v)}\ket{\pm} \overset{\text{coin}}{\longmapsto} (I\otimes C) \times \ket{(u,v)}\ket{\pm} =  \ket{(u,v)} (C\ket{\pm}).
$$
and $S$ is the scattering which moves the complex amplitudes $\psi_{u,v}^\pm$ according to the coin state. A practical choice for the scattering operator is the Grover diffusion as it is independent of the neighbors' ordering:
$$
\forall u \in V,\; \left(\psi_{u,v}^{\sigma(u,v)}\right)_{v\in V} \overset{\text{scattering}}{\longmapsto} D_{\text{deg}(u)}\times \left(\psi_{u,v}^{\sigma(u,v)}\right)_{v\in V},
$$
where $D_n = \left(\frac{2}{n}\right) _{i,j} - I_n$.

As an example of the above dynamics, one can consider the path of size 3 with the nodes $\{u,v,w\}$, with polarity $\sigma(u,v) = \sigma(v,w) = +$. Figure \ref{fig:ring} shows the unitary evolution of the walker from step $t$ to step $t+1$, when the coin coincides with the first Pauli matrix $X$.

\begin{figure}
	\centering
	\includegraphics[width=\textwidth]{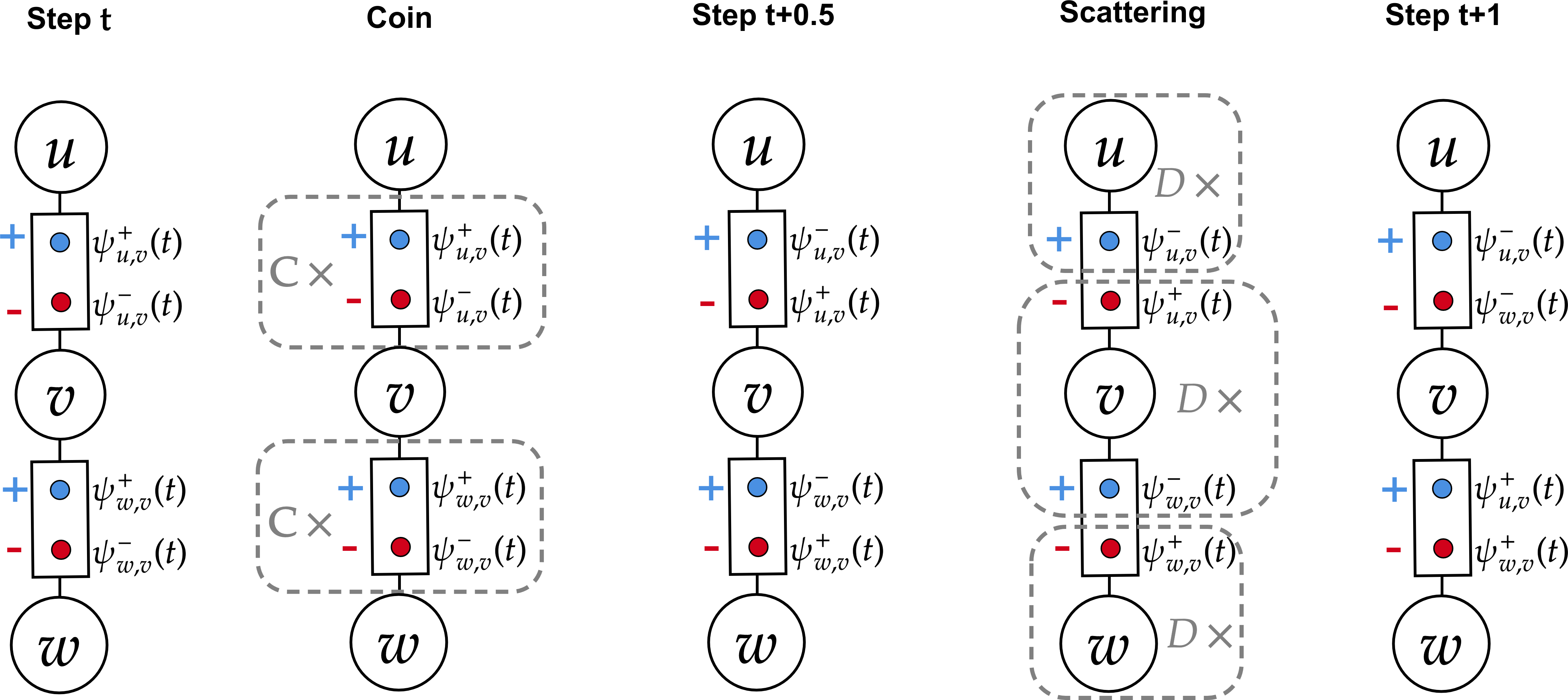}
	\caption{Example of a walk on a path of size 3.}
	\label{fig:ring}
\end{figure}

\paragraph{Polarity : Why and How}
There are two reasons we need polarity. 
First, if the coin operator acts differently on $\ket{+}$ and $\ket{-}$, then different polarities lead to different dynamics. Polarities, in fact, determines how each edge's state scatters within the network, analogously to other spatial searching algorithms on regular lattice~\cite{bezerra2021quantum, arrighi2018dirac, roget2020grover, portugal2013quantum}.
Moreover, polarity is used to divide the edges in two. So each node can only access one amplitude of each edge (depending on the polarity of the edges). This makes things much simpler for a distributed implementation, since we do not have to consider the case of two simultaneous operations on the same amplitude during scattering. 
One way to initialize the polarity on the graph is to create a coloring of $G$ at the initial time step. Afterward, edges have their $+$ pole in the direction of the node with the higher color. This design is especially convenient for bipartite graphs where every node sees only $+$ polarities or only $-$ polarities.

\subsection{Relation with other models}

In this subsection, we would like to discuss the relation between our model of discrete time quantum walk on graphs and other existing models in the literature: the flip-flop quantum walk and the Szegedy quantum walk.

\paragraph{Flip-flop quantum walk}
The flip-flop quantum walk model is the closest from ours. The walker's position is defined onto the nodes of the graph and state $\ket{u}\ket{v}$ symbolizes the walker being on node $u$ and going toward node $v$. The operation around the node that we call in our model scattering is now called the coin step as it acts without changing position of the flip-flop walker. The name flip-flop comes from the operation around the edges that acts by reversing origin and destination ($Q:\ket{u}\ket{v} \mapsto \ket{v}\ket{u}$). The flip-flop quantum walk coin dimension's is variable which makes changing the coin unpractical. Moreover, it faces the same problem we face with edges, namely the ordering of the neighbors which makes the dynamic of the walker for non-regular graphs and a non-Grover coin challenging. In contrast, the coin of our model is of dimension 2 and we provide the notion of polarity, essential to define arbitrary coins. All in all, the flip-flop quantum walk's dynamic is equivalent to our model's when our coin is set to $X$.

\paragraph{Szegedy quantum walk}
The Szegedy quantum walk does not include a coin operator. It is defined from a Markov chain $P$. Let us define the projections
$$
\Pi_A = \sum_x \ket{\alpha_x}\bra{\alpha_x} \qquad \text{and} \qquad \Pi_B = \sum_y \ket{\beta_y}\bra{\beta_y},
$$
where
$$
\ket{\alpha_x} = \ket{x} \otimes \sum_y P_{x,y}\ket{y} \qquad \text{and} \qquad \ket{\beta_y} = \sum_x P_{y,x}\ket{x} \otimes \ket{y}.
$$
For one time step, the unitary evolution reads $U = (2\Pi_B - I)(2\Pi_A - I)$. Now, let's note that if we apply the flip-flop operation $Q$ on $\ket{\beta_y}$ we obtain $\ket{\alpha_y}$. Thus it holds that $U = Q(2\Pi_A - I)Q(2\Pi_A - I)$. Therefore we can consider that $(2\Pi_A - I)$ is a special scattering (for our model) or a special coin (for the flip-flop model) and $Q$ the basic operation around the edges. Thus we get a model really similar to both previous ones, at the difference that two steps are made each iteration. We can also go further and look at the basic Markov chain used, which is $P_{x,y} = \frac{1}{\deg x}$. In such case, one can show that $(2\Pi_A - I)$ is actually the Grover diffusion around every node. This allows us to retrieve the same dynamic as the Grover flip-flop walk or our model with Grover diffusion and coin $X$. The difference between the previous model lies in the what can be changed. Indeed, the Szegedy's model only allow to change the scattering through the Markov chain $P$. However, its properties are strongly linked to the Markov chain $P$ which makes obtaining analytical results much easier.

\paragraph{What about searching ?}
One of the most well known applications of quantum walks is searching. In doing so, one or several positions are marked in some way, often through an oracle like in the Grover algorithm. This leads to a sinusoidal probability to be on a marked position, similar to the Grover algorithm's one. And as in the Grover algorithm, the complexity (composed of the probability of success and hitting time) is going to depend on the actual number of amplitudes marked. In that regard, our model differs from the two others as its walker is located onto the edges. This means that an oracle should mark the edges of the graph, not the nodes. This also means that only two amplitudes per edge are be marked. In contrast, the two other models mark nodes which means $d$ amplitudes are marked. However, our model is searching edges while the two others are searching nodes. It is possible through self loops or other tricks to use our model to search nodes, but as we mark less amplitudes, the performances are going to be lower in the general case. However, the fact that we are marking a constant number of amplitudes mights make studying our model easier for complex topologies. Note that Szegedy quantum walk marks nodes through changing the Markov chain which makes obtaining analytical results much easier.

\paragraph{Conclusion}
In conclusion, the three models are really similar. Their dynamics can be made identical by choosing the right setting. However, each of them has different pros and cons. Our model has a coin of dimension 2 and makes it possible to search edges. The flip-flop model insists more on changing the operations around nodes and does a better job at searching nodes. Finally the Szegedy quantum walk loses a bit of the freedom of the two previous models but allows solid analytical results both for the dynamics and searching.
	
	\section{Distributed implementation}\label{sec:distrib}
In this section we move to the multi-particle states dynamics, allowing a distributed implementation of the above quantum walk. We first introduce the model of distributed computation we use, then we present two distributed protocols to reproduce the dynamic of the walk.

\subsection{Model of distributed quantum computation}
The computation model we consider is a network of qubits, following the topology of a given graph $G=(V,E)$. We place qubits on both edges and nodes. We assume that both nodes and edges can apply one and two qubits quantum gates. As usual, the applications of quantum gates is supervised by classical algorithms. In order to exhibit the communications between the qubits, we sometimes explicit these classical algorithms. The communication cost is calculated by the number of two-qubits gates applied between neighboring qubits. This model is represented in Figure \ref{fig:network}.

\paragraph{Edge register}
Each edge $(u,v)$ has a register of two qubits respectively noted $q_{(u,v)}$ and $q_{(v,u)}$. These two qubits are connected with each other (i.e. we can apply two qubits gate between them). Figure \ref{fig:network} shows the qubits as black dot and the connection between two qubits of one edge in serpentine red lines. Applying a gate between these two qubits is counted as an edge communication.

\paragraph{Node register}
Similarly to the edges, each node hosts qubits. Here we consider two cases according to the connectivity between the qubits inside the nodes. Either the qubits are all-to-all connected or they are connected in a cycle. An all-to-all connectivity implies that all the qubits of a node $u$ are connected between each other and connected to all the edge qubits around node $u$. A cycle connectivity implies that the qubits of a node are connected in a cycle and each node qubit is connected to one edge qubit. These two connectivities are represented in the dotted boxes of Figure \ref{fig:network}. For our goal, which is to use this model to reproduce the dynamic of the quantum walk of Section \ref{sec:model}, using $O(\log d)$ qubits per node for the all-to-all connectivity or $2n$ qubits per node for the cycle connectivity is sufficient.

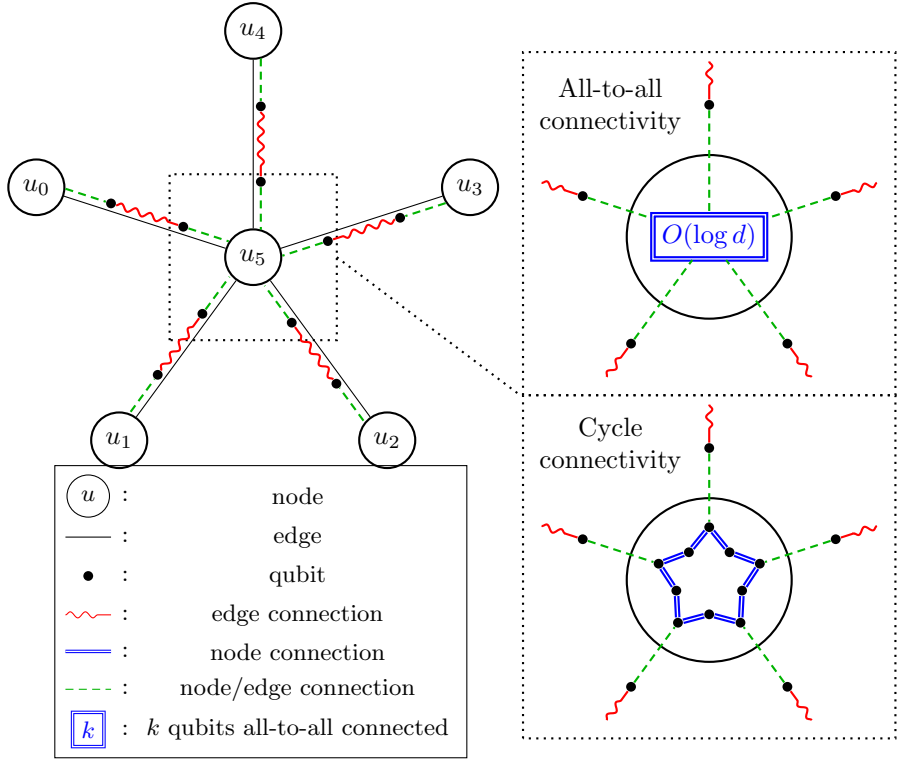
\begin{figure}
	\begin{tikzpicture}[bullet/.style={circle, fill, inner sep=1.3pt}]
		
		\draw[dotted, thick] (1.1,2) -- (3.5,0.18);
		\node at (4.7,4) {$\begin{matrix}\text{All-to-all}\\\text{connectivity}\end{matrix}$};
		\node at (4.7,-0.5) {$\begin{matrix}\text{Cycle}\\\text{connectivity}\end{matrix}$};
		
		\begin{scope}[shift={(0,2)}] 
			\draw[thick, dotted] (-1.1,-1.1) rectangle (1.1,1.1);
			
			\node[draw, circle, thick] (u5) at (0,0) {$u_5$};
			
			\foreach \i in {0,...,4} {
				\begin{scope}[rotate=72*\i-18]
					\node[draw, circle, thick] (u\i) at (-3,0) {$u_\i$};
				\end{scope}
			}
			
			\foreach \i in {0,...,4} {
				\draw (u\i) -- (u5);
				\begin{scope}[rotate=72*\i-18]
					\node[] (a\i) at (-2.76,0.1) {};
					\node[bullet] (b\i) at (-2,0.1) {};
					\node[bullet] (c\i) at (-1,0.1) {};
					\node[] (d\i) at (-0.24,0.1) {};
					\draw[darkgreen, densely dashed, thick] (a\i) -- (b\i);
					\draw[red,decorate,decoration={snake,amplitude=.4mm,segment length=2mm,post length=1mm}, thick] (b\i) -- (c\i);
					\draw[darkgreen, densely dashed, thick] (c\i) -- (d\i);
				\end{scope}
			}
		\end{scope}

		\begin{scope}[shift={(6,-2.28)}, scale=0.7] 
			\draw[thick, dotted] (-3.5,-3) rectangle (3.5,3.5);
			\draw[thick] (0,0) circle [radius=1.55];
			
			\foreach \i in {0,...,4} {
				\begin{scope}[rotate=72*\i-18]
					\def \Pa {(-2.5,0)};
					\def \Pb {(-1,0)};
					\def \Pc {(-0.65,0)};
					\def \Px {(-3.5,0)};
					\pgfmathparse{72*\i-18};\pgfmathparse{int(\pgfmathresult)};\edef \angle {\pgfmathresult};
					\pgfmathparse{3*\i};\pgfmathparse{int(\pgfmathresult)};\edef \a {\pgfmathresult};
					\pgfmathparse{3*\i+1};\pgfmathparse{int(\pgfmathresult)};\edef \b {\pgfmathresult};
					\pgfmathparse{3*\i+2};\pgfmathparse{int(\pgfmathresult)};\edef \c {\pgfmathresult};
					\node[] (x\i) at \Px {};
					\node[bullet] (\a) at \Pa {};
					\node[bullet] (\b) at \Pb {};
					
					
					\begin{scope}[rotate=36]
						\node[bullet] (\c) at (-0.65,0) {};
					\end{scope}
				\end{scope}
			}
			
			\foreach \i in {0,...,4} {
				\pgfmathparse{3*\i};\pgfmathparse{int(\pgfmathresult)};\edef \a {\pgfmathresult};
				\pgfmathparse{3*\i+1};\pgfmathparse{int(\pgfmathresult)};\edef \b {\pgfmathresult};
				\pgfmathparse{3*\i+2};\pgfmathparse{int(\pgfmathresult)};\edef \c {\pgfmathresult};
				\pgfmathparse{3*\i+4};\pgfmathparse{int(\pgfmathresult)};\edef \d {\pgfmathresult};
				\ifthenelse{\i=4}{\def \d {1};}{}
				\draw[red,decorate,decoration={snake,amplitude=.4mm,segment length=2mm,post length=1mm}, thick] (x\i) -- (\a);
				\draw[darkgreen, densely dashed, thick] (\a) -- (\b);
				\draw[blue, double distance=0.5pt, thick] (\b) -- (\c);
				\draw[blue, double distance=0.5pt, thick] (\c) -- (\d);
			}
		\end{scope}

		\begin{scope}[shift={(6,2.27)}, scale=0.7] 
			\draw[thick, dotted] (-3.5,-3) rectangle (3.5,3.5);
			\draw[thick] (0,0) circle [radius=1.55];
			
			\foreach \i in {0,...,4} {
				\begin{scope}[rotate=72*\i-18]
					\def \Pa {(-2.5,0)};
					\def \Pb {(-1,0)};
					\def \Pc {(-0.65,0)};
					\def \Px {(-3.5,0)};
					\pgfmathparse{72*\i-18};\pgfmathparse{int(\pgfmathresult)};\edef \angle {\pgfmathresult};
					\pgfmathparse{3*\i};\pgfmathparse{int(\pgfmathresult)};\edef \a {\pgfmathresult};
					\pgfmathparse{3*\i+1};\pgfmathparse{int(\pgfmathresult)};\edef \b {\pgfmathresult};
					\pgfmathparse{3*\i+2};\pgfmathparse{int(\pgfmathresult)};\edef \c {\pgfmathresult};
					\node[] (x\i) at \Px {};
					\node[bullet] (\a) at \Pa {};
				\end{scope}
			}
			
			\node[draw, rectangle, blue, double distance=0.5pt, thick] (register) at (0,0) {$O(\log d)$};
			
			\foreach \i in {0,...,4} {
				\pgfmathparse{3*\i};\pgfmathparse{int(\pgfmathresult)};\edef \a {\pgfmathresult};
				\pgfmathparse{3*\i+1};\pgfmathparse{int(\pgfmathresult)};\edef \b {\pgfmathresult};
				\pgfmathparse{3*\i+2};\pgfmathparse{int(\pgfmathresult)};\edef \c {\pgfmathresult};
				\pgfmathparse{3*\i+4};\pgfmathparse{int(\pgfmathresult)};\edef \d {\pgfmathresult};
				\ifthenelse{\i=4}{\def \d {1};}{}
				\draw[red,decorate,decoration={snake,amplitude=.4mm,segment length=2mm,post length=1mm}, thick] (x\i) -- (\a);
				\draw[darkgreen, densely dashed, thick] (\a) -- (register);
			}
		\end{scope}

		\begin{scope}[shift={(-0.5,-3)}, scale=0.3] 
			\node [matrix,draw] (my matrix) at (2,1)
			{
				\node[draw, circle] (u) at (0,0) {\scriptsize$u$}; &\node {\scriptsize:}; & \node {\scriptsize node}; \\
				\draw (-0.3,0) -- (0.3,0); &\node {\scriptsize:}; & \node {\scriptsize edge}; \\
				\node[bullet] (q) at (0,0) {}; &\node {\scriptsize:}; & \node {\scriptsize qubit}; \\
				\draw[red,decorate,decoration={snake,amplitude=.4mm,segment length=2mm,post length=1mm}] (-0.3,0) -- (0.3,0); &\node {\scriptsize:}; & \node {\scriptsize edge connection}; \\
				\draw[blue, double distance=0.5pt] (-0.3,0) -- (0.3,0); &\node {\scriptsize:}; & \node {\scriptsize node connection}; \\
				\draw[darkgreen, densely dashed] (-0.3,0) -- (0.3,0); &\node {\scriptsize:}; & \node {\scriptsize node/edge connection}; \\
				\node[draw, rectangle, blue, double distance=0.5pt] (register) at (0,0) {\scriptsize$k$}; &\node {\scriptsize:}; & \node {\scriptsize$k$ qubits all-to-all connected}; \\
			};
		\end{scope}

	\end{tikzpicture}
	\caption{A graphical representation of the model of distributed computation. Show how the qubits are positioned and connected for a given graph. Two node connectivity are considered: all-to-all and cycle.}
	\label{fig:network}
\end{figure}

\subsection{Reproduction of the quantum walk's dynamic}
There are two main points allowing us to reproduce the dynamic of Section \ref{sec:model}' quantum walk. First, we associate each amplitude of the quantum walk to a qubit of the model represented in Figure \ref{fig:network}. Thus the qubit $q_{(u,v)}$ of the distributed model is associated to the amplitude $\ket{u,v}\ket{\sigma(u,v)}$ of the quantum walk. All the other qubits (situated on the nodes) are considered anscillary qubits and will never be measured. The second point regards the states we are allowed to use. Indeed, we restrict ourselves to linear decomposition of unary states. Here, we call unary states a state in which all the qubits are at $\ket{0}$ except exactly one, which is at $\ket{1}$. An example of such unary state is the well known $W$ state defined as 
$$
W = \frac{1}{\sqrt 3}\left(\ket{001} + \ket{010} + \ket{100}\right).
$$

\paragraph{Edges Register}
The edge register represents the position of the walker and is the one measured at the end of the algorithm. It consists of two qubits per edge, corresponding, respectively, to the + and - polarity amplitudes.
For the sake of simplifying the notations, let us arbitrary enumerate the edges of $E$ such that we have $E = \{e_1,\ldots,e_{|E|}\}$. We define $\delta_k^n = \underbrace{0\ldots0}_{k-1 \text{ times}}1\underbrace{0\ldots0}_{n-k \text{ times}}$. The complete state of the walk is a linear combination of all the $\left( \ket{\delta_k^{2|E|}}\right) _{k \in 2|E|}$, where $\ket{\delta_{2k}^{2|E|}}$ corresponds to the amplitude $\ket{e_k}\ket{+}$ of the local model, and $\ket{\delta_{2k+1}^{2|E|}}$ corresponds to amplitude $\ket{e_k}\ket{-}$. It is important that the global state of the register remains a superposition of $\left( \ket{\delta_k^{2|E|}}\right) _{k \in 2|E|}$, as this allows us to obtain a valid solution to the search problem during measurement. In fact, no matter which state $\delta_k^{2|E|}$ is measured, all edges measure state 0 except one, which measures state 1.

\paragraph{Coin}
In the distributed model, the coin operation translates to the application of a $4\times 4$ unitary acting on the subspace composed of the unary states. We call this kind of unitary a unary extension, defined as follows~:
\begin{definition}\label{def:unary_extension}
	Let us have a $n\times n$ unitary $U$. We call $\Lambda(U)$ (of size $2^n\times 2^n$) the unary extension of $U$. $\Lambda(U)$ satisfies the following properties:
	$$
	\forall x \in \{0,\ldots,n-1\}, \quad U\ket{x} = \sum_{i=0}^{n-1} a_i\ket{i} \Rightarrow \Lambda(U)\ket{\delta_x^n} = \sum_{i=0}^{n-1} a_i\ket{\delta_i^n},
	$$
	and
	$$\Lambda(U) \ket{0\ldots 0} = \ket{0\ldots 0}.$$
\end{definition}
Therefore, we want to apply the unary extension of the coin $C$ on the edge qubits of each edge of the distributed model. As shown in Theorem \ref{th:circuit_coin}, there is a circuit of depth $O(1)$ that applies $\Lambda (C)$ on $e_{2i},e_{2i+1}$.

\begin{theorem}\label{th:circuit_coin}
	Let $U$ be a $2\times 2$ unitary operator (a one qubit gate). It holds that
	$$
	\Lambda(U) = \left(\begin{matrix}
		1 & 0 & 0 & 0\\
		0 & U_{00} & U_{01} & 0\\
		0 & U_{10} & U_{11} & 0\\
		0 & 0 & 0 & 1\\
	\end{matrix}\right).
	$$
	Furthermore, $\Lambda(U)$ is realized by the following circuit: \hspace{0.5cm}\Qcircuit @C=1em @R=.7em {
		& \ctrl{1} & \gate{U} & \ctrl{1} & \qw \\
		& \targ & \ctrl{-1} & \targ & \qw
	}

	In this circuit, the CNOT applications are forcing the controlled application of $U$ to be applied only on the basis states $\ket{01}$ and $\ket{10}$.
\end{theorem}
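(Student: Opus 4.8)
The statement has two halves --- the closed form of $\Lambda(U)$ and the claim that the displayed circuit implements it --- and I would prove them in that order.

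For the closed form, I would first fix the bookkeeping convention: identify the two edge qubits with the index set $\{0,1\}$ so that the unary states are $\ket{\delta_0^2}=\ket{01}$ and $\ket{\delta_1^2}=\ket{10}$, the all-zero state is $\ket{00}$, and the computational basis is ordered $\ket{00},\ket{01},\ket{10},\ket{11}$. Definition \ref{def:unary_extension} can then be read off directly: it forces $\Lambda(U)\ket{00}=\ket{00}$, and, writing $U\ket{0}=U_{00}\ket{0}+U_{10}\ket{1}$ and $U\ket{1}=U_{01}\ket{0}+U_{11}\ket{1}$, it forces the images of $\ket{\delta_0^2}$ and $\ket{\delta_1^2}$ to have coordinates given by the two columns of $U$. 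This determines every column of the $4\times 4$ matrix except the one indexed by $\ket{11}$, the unique non-unary, non-zero state; completing it by $\ket{11}\mapsto\ket{11}$ is the natural choice, and it is forced if one additionally requires $\Lambda(U)$ to be the identity on the orthogonal complement of $\mathrm{span}\{\ket{00},\ket{01},\ket{10}\}$. The resulting matrix is exactly the displayed one, and it is unitary because it is block-diagonal, $\mathrm{diag}(1,U,1)$ in that ordering, with $U$ unitary; it manifestly satisfies Definition \ref{def:unary_extension}.

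For the circuit, I would simply propagate each of the four computational basis states through the three-gate sequence --- CNOT with control the top qubit and target the bottom, then controlled-$U$ with control the bottom qubit and $U$ acting on the top, then CNOT again --- and collect the four resulting vectors as the columns of the implemented matrix. The states $\ket{00}$ and $\ket{11}$ are fixed points: on $\ket{00}$ all three gates are inert, and on $\ket{11}$ the two CNOTs cancel while the controlled-$U$ is inert because the first CNOT has flipped its control qubit to $0$. On $\ket{01}$ the first CNOT is inert, the controlled-$U$ applies $U$ to the top qubit, and the last CNOT moves the resulting $\ket{11}$ component back to $\ket{10}$, producing $U_{00}\ket{01}+U_{10}\ket{10}$; symmetrically $\ket{10}$ maps to $U_{01}\ket{01}+U_{11}\ket{10}$. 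Assembling the four images recovers the displayed $\Lambda(U)$, and the circuit has constant depth (three layers). The structural point worth stating is that the outer CNOTs conjugate the controlled-$U$ into a gate supported on the unary subspace $\mathrm{span}\{\ket{01},\ket{10}\}$ that fixes $\ket{00}$ and $\ket{11}$, which is precisely what a unary extension requires.

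There is no genuine obstacle here: both halves are one-line unwindings once the conventions are pinned down. The only thing requiring care is exactly that pinning-down --- a transposition or ordering slip in the identification between the unary two-qubit states and $\{\ket{0},\ket{1}\}$ would make the computed circuit matrix come out as $\Lambda(U^{\mathsf T})$ or as $\Lambda(U)$ with rows/columns permuted rather than $\Lambda(U)$ itself --- so I would state the convention explicitly at the outset and check it against one nontrivial basis state (say $\ket{01}$) before asserting the general form.
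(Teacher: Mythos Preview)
Your proposal is correct. The paper does not give a formal proof of this theorem at all: the only justification is the single sentence embedded in the statement (``the CNOT applications are forcing the controlled application of $U$ to be applied only on the basis states $\ket{01}$ and $\ket{10}$''), which is exactly the structural point you make at the end of your second paragraph. Your basis-state-by-basis-state verification is the natural elaboration of that intuition, and your care about the identification $\ket{\delta_0^2}=\ket{01}$, $\ket{\delta_1^2}=\ket{10}$ is warranted, since the paper's indexing of $\delta_k^n$ is slightly inconsistent and this is the convention that makes the displayed matrix match Definition~\ref{def:unary_extension}.
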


\paragraph{Scattering}
Reproducing the scattering operation, like the coin operation, requires applying the unary extension of the scattering operator. This unary extension must be applied locally around the nodes. The polarity ensures that these local scatterings commute and can be applied simultaneously. Hence, the distributed scheme for the scattering around one node is defined as follows.

\begin{definition}{Distributed scattering around a node}\label{def:scatter}
	Let $v$ be a node of degree $d$. Furthermore, let us enumerate the quantum walk amplitudes around to $v$ (as defined by the polarity) $a_0,\ldots,a_
	{d-1}$ with the map
	$$
	\phi : \left\{\begin{matrix}
		2k &\mapsto& \ket{e_k}\ket{+}\\
		2k+1 &\mapsto& \ket{e_k}\ket{-}\\
	\end{matrix}\right.
	$$
	This numbering can be imposed by the hardware or arbitrarily set and is necessary to define the scattering operator $U(d)$. Note that the Grover diffusion operator is independent of the numbering.
	In the quantum walk dynamic of Section \ref{sec:model}, we apply $U(d)$ on $\text{Vec}\left(\left\{ \phi(a_i)\right\}_{0\leq k< d} \right)$.
	In order to reproduce this dynamic with the distributed model, we apply $\Lambda(U)$ on the qubits $a_0,\ldots,a_
	{d-1}$.
\end{definition}

In the next sections, we present two ways to achieve that goal, one for each connectivity considered in the distributed model (see Figure \ref{fig:network}).

\subsection{Scattering operation with all-to-all connectivity}
In this section, we consider the distributed model with all-to-all connectivity. We show how to implement the scattering operation of the quantum walk dynamic presented in Section \ref{sec:model}. 

According to Definition \ref{def:scatter}, we have an edge register $\mathcal{E}$ containing $d$ qubits $a_0,\ldots,a_{d-1}$ where $d$ is the degree of the node. In order to apply the scattering, we introduce two ancillary registers $\mathcal N_1$ and $\mathcal N_2$ on the node. $\mathcal N_1$ contains $\lceil \log d \rceil$ qubits and while $\mathcal N_2$ has one. We now define an operation $T$ that transfers the unary state on register $\mathcal E$ to the register $\mathcal N_1$ in a binary format while marking non-zero state transfers in register $\mathcal N_2$. Formally,
$$
\forall k, \quad T : \left\{\begin{matrix}
	\ket{\delta_k^d}_{\mathcal{E}} \ket{0}_{\mathcal{N}_1} \ket{0}_{\mathcal{N}_2} &\mapsto& \ket{0}_{\mathcal{E}} \ket{k}_{\mathcal{N}_1} \ket{1}_{\mathcal{N}_2}\\
	\ket{0}_{\mathcal{E}} \ket{0}_{\mathcal{N}_1} \ket{0}_{\mathcal{N}_2} &\mapsto& \ket{0}_{\mathcal{E}} \ket{0}_{\mathcal{N}_1} \ket{0}_{\mathcal{N}_2}\\
\end{matrix}\right. .
$$

Let us denote the application of $U(d)$ on register $\mathcal N_1$ controlled by register $\mathcal N_2$ as $C_{\mathcal N_2}[U(d)_{\mathcal N_1}]$. One can show that 
$$
\Lambda(U)_{\mathcal E} \otimes I_{\mathcal N_1} \otimes I_{\mathcal N_2}  = T^{-1} \times (I_{\mathcal E} \otimes C_{\mathcal N_2}[U(d)_{\mathcal N_1}])\times T.
$$

It is actually possible to decompose $T$ into a product of $d$ commutative operators $T_k$, each acting on the registers $\mathcal N_1$ and $\mathcal N_2$ as well as qubit $a_k$.
$$
\forall k, \quad T_k : \left\{\begin{matrix}
	\ket{1}_{a_k} \ket{0}_{\mathcal{N}_1} \ket{0}_{\mathcal{N}_2} &\mapsto& \ket{0}_{a_k} \ket{k}_{\mathcal{N}_1} \ket{1}_{\mathcal{N}_2}\\
	\ket{0}_{a_k} \ket{0}_{\mathcal{N}_1} \ket{0}_{\mathcal{N}_2} &\mapsto& \ket{0}_{a_k} \ket{0}_{\mathcal{N}_1} \ket{0}_{\mathcal{N}_2}\\
\end{matrix}\right. .
$$

It holds that 
$$
T = \prod_k T_k.
$$

In Algorithm \ref{algo:Trk}, we provide a distributed scheme for $T_k$. It is running on the node and it needs two communication methods: \texttt{RequestCnot}(edge, target), which applies a Not on target controlled by the qubit in edge accessible, according to polarity; and \texttt{ApplyMCT}(edge) which applies a Not on the edge's qubit controlled by the full node register $\mathcal N_1 \cup \mathcal N_2$. Note that, while \texttt{ApplyMCT}(edge) has significant computational cost, it is possible to limit the communication cost to one CNOT between the edge register and the node register by using an anscillary qubit as seen in Figure \ref{fig:mct_application}. The MCT (Multi-Controlled Tofolli gate) gate can then be applied on the node register with depth $O(d)$, as shown in Biswal et al \cite{biswal2019techniques}.

\begin{figure}
	\[\Qcircuit @C=1em @R=.7em {
		\lstick{q_0}	& \ctrl{2} & \qw & \ctrl{2} & \qw \\
		\lstick{q_1}	& \ctrl{1} & \qw & \ctrl{1} & \qw \\
		\lstick{q_2}	& \targ & \ctrl{1} & \targ & \qw \\
		\lstick{a_i}	& \qw & \targ & \qw & \qw
	}\qquad
	\begin{matrix}
		\\
		\\
		\\
		\equiv\\
	\end{matrix}
	\qquad\quad
	\Qcircuit @C=1em @R=.7em {
		\lstick{q_0}	& \ctrl{3} & \qw \\
		\lstick{q_1}	& \ctrl{2} &  \qw \\
		\lstick{q_2}	& \qw  & \qw \\
		\lstick{a_i}	& \targ & \qw
	}
	\]
	\caption{Application of a MCT (multi-controlled X gate) on an edge controlled by the whole node register via a CNOT gate and an anscillary qubit. $Q_0,q_1$ are the qubits of the node register, $q_2$ is the anscillary qubit and $a_i$ is the targeted edge qubit. Both circuits are equivalent.}
	\label{fig:mct_application}
\end{figure}
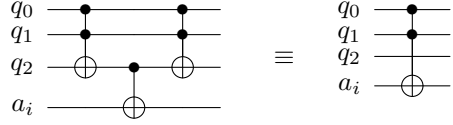

Algorithm \ref{algo:Trk} uses $1+\log d$ calls to \texttt{RequestCnot}(edge, target) and one call to \texttt{ApplyMCT}(edge) which results in a communication cost of $2+\log d$ CNOT between the node and an edge. We need to apply it $2d$ times for one scattering. Thus one scattering admits a communication cost of $2d(2+\log d) = O(d\log d)$ CNOT.

Finally, Figure \ref{fig:circuit} shows an example of this distributed design of the path graph of five nodes. The path graph has the particularity of having the same topology as a circuit (a qubit being connected to the preceding and following qubit). Notice that such circuit coincides with a partitioned QCA, each operation is local and translational invariant. 

\begin{algorithm}[h]
	\caption{Distributed scheme for $\text{T}_k$ on node $u$ of degree $d$}
	\label{algo:Trk}
	\begin{algorithmic}[1]
		\Require $e$ an edge connected to $u$
		\Require $1\leq k \leq d$
		\Function{$\text{T}_k$}{$e,k$}
		\State $r \gets \lceil\log d\rceil$ \Comment{The size of $u$'s register is $r+1$}
		\For{$0\leq i < r \mid \left((k-1)^{(2)} \right)_i = 1$}
		\State \Call{RequestCnot}{$e, q_i$} \Comment{CNOT on the $i^{\text{th}}$qubit of $u$}
		\EndFor
		
		\State \Call{RequestCnot}{$e, q_r$}
		
		\State Apply $X$ gate on all qubits $q_i$ of $u$ such that $\left((k-1)^{(2)} \right)_i = 0$.
		
		\State \Call{ApplyMCT}{e}
		
		\State Apply $X$ gate on all qubits $q_i$ of $u$ such that $\left((k-1)^{(2)} \right)_i = 0$.
		
		\EndFunction
	\end{algorithmic}
\end{algorithm}

\begin{figure}[h]
	\centering
	\includegraphics[width=\textwidth]{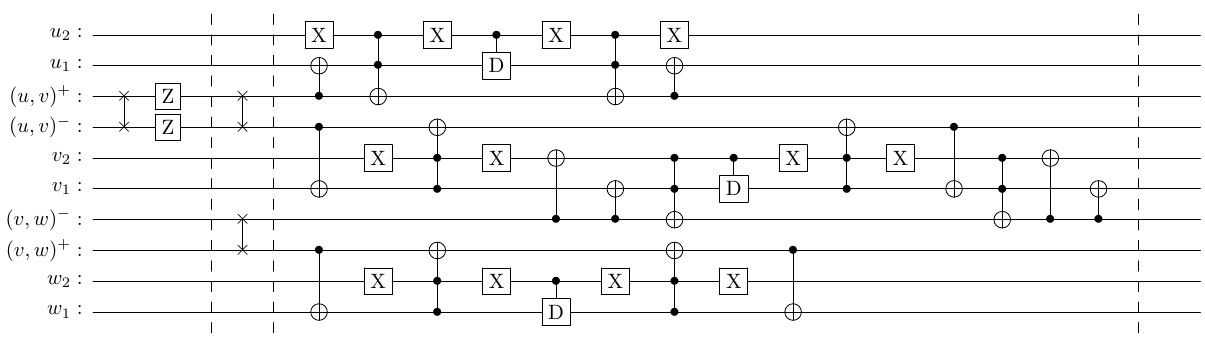}
	\caption{Circuit of one step of the quantum walk for the path graph $u-v-w$. The circuit applies successively the oracle on $(u,v)$, the coin, $T$, $D$, $T^{-1}$.}
	\label{fig:circuit}
\end{figure}

\subsection{QCA-like general scattering}
In this section, we consider the distributed model with cycle connectivity as seen in Figure \ref{fig:network}. Under this new connectivity, we show how to implement the scattering operation of the quantum walk dynamic presented in Section \ref{sec:model}. The main idea here is to define a QCA-like dynamical system onto the node register that can implement any unitary. It is based on three following ideas: 
\begin{enumerate}
	\item Any unitary can be decomposed into a product of two-levels unitaries.
	\item There exists a quantum walk on the cycle with space-time dependent coin which implements any given two-levels unitaries.
	\item The dynamic of a quantum walk on a cycle can be reproduced by a QCA (Quantum Cellular Automata). The resulting unitary is exactly the unary extension of the quantum walk's unitary.
\end{enumerate} 
Therefore, to implement the scattering operation within our distributed model, first we decompose the scattering operator $U$ into two-levels unitaries. Then we calculate the coefficients of the coin operators implementing the two-levels unitaries. And finally, we apply the dynamic of the quantum walk on a cycle with the coin operators, previously calculated via a QCA.

In the next three sections, we detail these ideas. Section \ref{sec:decompUnit} introduces the notion of two-levels unitaries and decomposition. The Theorem \ref{th:decompose_batch}, in particular, show that an application of the two-levels unitaries per batch is possible, optimizing the application of general unitaries. Section \ref{sec:2lvlqw} presents in Theorem \ref{th:transpile} a quantum walk on a cycle that is able to implement any two-levels unitary. Algorithm \ref{algo:transpile} gives the procedure to apply any unitary via a quantum walk on a cycle in $O(d^2)$ steps (where $d$ is the dimension of the unitary). Section \ref{sec:scatqca} shows how the previous quantum walk dynamic can be implemented with a QCA-like protocol that respects the topology represented in Figure \ref{fig:network}. Figure \ref{fig:circuit_scatter}, in particular, the resulting circuit.

\subsubsection{Decomposition of unitaries}\label{sec:decompUnit}
In this section, we show how an unitary operator can be decomposed into two-levels unitaries,as in \cite{nielsen2010quantum}. Two-levels unitaries are unitaries that only acts on two amplitudes. Similarly, we can define a special type of two-levels unitariy that only act on one amplitude as one-level unitary. Definition \ref{def:one-level} and \ref{def:two-level} give a more formal definition of these objects and introduce notations to describe them.
\begin{definition}{One-level unitary operator}\label{def:one-level}
	A one-level unitary operator is the tuple $[c, k, n]$, where $n\in\mathbb{N}, 0 \leq k < n$ and $c \in \mathbb{C}$.
	It denotes the following unitary operation:
	$$
	[c, k, n] = c\ket{k}\bra{k} + \sum_{\begin{matrix}i=0\\i\neq k\end{matrix}}^{n-1} \ket{i}\bra{i}.
	$$
\end{definition}
\begin{definition}{Two-levels unitary operator}\label{def:two-level}
	A two-level unitary operator is the tuple $[U, k_0, k_1, n]$, where $n\in\mathbb{N}, 0 \leq k_0\neq k_1 < n$ and $U$ is a $2\times 2$ unitary.
	It denotes the following unitary operation:
	$$
	[U,k_0,k_1,n] = \sum_{i,j = 0}^1 U_{i,j}\ket{k_i}\bra{k_j} + \sum_{\begin{matrix}i=0\\i\notin \{k_0,k_1\}\end{matrix}}^{n-1} \ket{i}\bra{i}.
	$$
\end{definition}

The main tool at our disposal to decompose unitaries in two-levels unitaries is Theorem \ref{th:decompose_aux}. This result allows us to find a two-levels unitary $M$ such that multiplying together $M$ and $U$ makes one element of $U$ under the diagonal vanish. If the process is repeated in the right order we can find two-levels unitaries $M_1,\ldots,M_k$ such that $M_1\times U$ has one coordinate under the diagonal equal to $0$, $M_2\times M_1 \times U$ has two, and so on and so far. Thus, $M_{k} \times M_{k-1} \times \ldots \times M_1 \times U$ is upper triangular. Since the result of this product is both unitary and upper triangular, it is diagonal. Thus, we apply one-level unitaries $M'1,\ldots,M'_l$ in order to make all the diagonal coefficients equal to one. Finally, we have
$$
M'_l \times M'_{l-1} \times \ldots \times M'_{1} \times M_{k} \times M_{k-1} \times \ldots \times M_1 \times U = I_2.
$$
And we get the resulting decomposition
$$
M_1 \times M_{2} \times \ldots \times M_k \times M'_1 \times \ldots \times M'_l = U.
$$

This decomposition is detailed in Algorithm \ref{algo:decompose_naive}.
\begin{theorem}\label{th:decompose_aux}
	Let $U$ be a $n\times n$ unitary and let $0 \leq k_0 < k_1 < n$. It exists a two-levels unitary $[T_U, k_0,k_1,n]$ such that
	$$
	\forall 0 \leq i,j < n, \quad \{i,j\}\cap\{k_1,k_1\} = \emptyset \Rightarrow\left([T_U,k_0,k_1,n] \times U\right)_{i,j} = U_{i,j},
	$$
	and
	$$
	\left([T_U,k_0,k_1,n] \times U\right)_{k_1,k_0} = 0.
	$$
\end{theorem}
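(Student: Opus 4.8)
The plan is to realize $[T_U,k_0,k_1,n]$ as a single Givens-type rotation acting only on the two-dimensional subspace spanned by $\ket{k_0}$ and $\ket{k_1}$ — exactly one step of the classical elimination procedure of \cite{nielsen2010quantum} (cf. Algorithm \ref{algo:decompose_naive}) — chosen so that left-multiplying $U$ by it cancels the $(k_1,k_0)$ entry. The ``unchanged entries'' half of the statement is automatic and does not depend on the particular $2\times 2$ unitary used: by Definition \ref{def:two-level}, $[T_U,k_0,k_1,n]\ket{i}=\ket{i}$ for every $i\notin\{k_0,k_1\}$, hence $\bra{i}[T_U,k_0,k_1,n]=\bra{i}$ for such $i$, and therefore $\left([T_U,k_0,k_1,n]\times U\right)_{i,j}=\bra{i}U\ket{j}=U_{i,j}$ for all $j$ whenever $i\notin\{k_0,k_1\}$. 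This already implies the first claim (in fact a slightly stronger version, needing only the row index outside $\{k_0,k_1\}$).

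It thus remains to pick the $2\times 2$ unitary $T_U$ so that the $(k_1,k_0)$ coordinate vanishes. Expanding $\left([T_U,k_0,k_1,n]\times U\right)_{k_1,k_0}$ with Definition \ref{def:two-level}, only the $\bra{k_1}$-terms of $[T_U,k_0,k_1,n]$ survive, so this entry equals $(T_U)_{10}\,U_{k_0,k_0}+(T_U)_{11}\,U_{k_1,k_0}$. Writing $a=U_{k_0,k_0}$ and $b=U_{k_1,k_0}$: if $b=0$, take $T_U=I_2$, so that $[I_2,k_0,k_1,n]=I_n$ leaves $U$ untouched and the target entry is already $0$; if $b\neq 0$, set $r=\sqrt{|a|^2+|b|^2}>0$ and
$$
T_U=\frac{1}{r}\mat{\bar a & \bar b\\ -b & a}.
$$
The two columns of $T_U$ are orthonormal, so $T_U T_U^{\dagger}=I_2$ and $T_U$ is unitary; and $(T_U)_{10}\,a+(T_U)_{11}\,b=\tfrac{1}{r}(-ba+ab)=0$, which is precisely what was required. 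Combining this with the first paragraph gives both asserted properties.

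I do not expect any real obstacle here: the whole argument is a one-line Givens rotation. The only points demanding a little care are the separate handling of the degenerate case $b=0$ (so as not to divide by $r$ when $a=b=0$, and because in that case column $k_0$ already has the desired shape), and the short verification that the explicit matrix above is genuinely unitary rather than merely an isometry of the relevant vector.
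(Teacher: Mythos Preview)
Your proof is correct and is precisely the standard Givens-rotation construction from \cite{nielsen2010quantum} that the paper invokes; the paper itself does not give a proof of this theorem, treating it as known and citing \cite{nielsen2010quantum} (see the discussion preceding Algorithm \ref{algo:decompose_naive}). Your handling of the degenerate case $b=0$ and the observation that only the row index needs to lie outside $\{k_0,k_1\}$ are both fine.
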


\begin{algorithm}
	\caption{Two-level decomposition \cite{nielsen2010quantum}}\label{algo:decompose_naive}
	\begin{algorithmic}
		\Require $U$
		\Ensure $n = \dim U$
		\State $U_1 \gets U$
		\State $k \gets 1$
		\For{$i$ in $0,\ldots,n-1$}
		\For{$j$ in $i+1,\ldots,n-1$}
		\State $M'_k \gets [T_{U_k}, i, j, n]$ \Comment{According to Theorem \ref{th:decompose_aux}}
		\State $M \gets (M')^\dagger = [T_{U_k}^\dagger, i, j, n]$
		\State $U_{k+1} \gets M'_k \times U_k$
		\State $k \gets k+1$
		\EndFor
		\EndFor
		\For{$i$ in $0,\ldots,n-1$}
		\State $M_k \gets [(U_k)_{i,i}, i, n]$ \Comment{one-level unitary}
		\State $U_{k+1} \gets U_k$
		\State $k \gets k+1$
		\EndFor
	\end{algorithmic}
\end{algorithm}

We now present a way to apply these one-level and two-level unitaries per batch, where all the unitaries of one batch commuting with each others. This reduces the computational cost of implementing arbitrary unitaries since we can apply several two-levels unitaries at the same time.
\begin{theorem}[Batched application]\label{th:decompose_batch}
	For all unitary $U$, there exists $B_1,\ldots,B_k$ sets of one-level and two-level unitaries such that 
	$$
	U = \oprod{i=1}{k} \prod_{M\in B_k}M,
	$$
	where $k = O (\dim U)$.
	
	The one-level and two level unitaries of a same batch commute.
\end{theorem}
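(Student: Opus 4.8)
The plan is to build the batched decomposition by re-scheduling a triangularisation of $U$ along the classical wavefront pattern used for parallel Givens QR, and then to clean up the diagonal with one extra batch. Write $n=\dim U$. First I would pin down the commutation bookkeeping that makes ``batch'' meaningful: a two-level unitary $[A,a_0,a_1,n]$ acts as the identity on every $\ket{i}$ with $i\notin\{a_0,a_1\}$, so $[A,a_0,a_1,n]$ and $[B,b_0,b_1,n]$ commute as soon as $\{a_0,a_1\}\cap\{b_0,b_1\}=\emptyset$; one-level unitaries are diagonal, so any two of them commute and one of them commutes with every two-level unitary not supported on its level. Hence a set of two-level unitaries with pairwise disjoint level-pairs is an admissible batch, and all one-level unitaries that arise can be collected into a single batch.

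For the triangularisation I would not use the direct pivots of Theorem~\ref{th:decompose_aux} (the ones killing the $(k_1,k_0)$ entry all share a common level within a column and so refuse to batch); instead I would annihilate each sub-diagonal entry $(i,j)$, $i>j$, by a two-level unitary on the \emph{adjacent} levels $(i-1,i)$ --- an elementary Givens-type step choosing a $2\times2$ unitary that sends $\big((U)_{i-1,j},(U)_{i,j}\big)$ to a multiple of $\ket{0}$, which always exists and whose left action changes only rows $i-1$ and $i$. I would order these annihilations by the time stamp $\tau(i,j)=(n-i)+2j$ and apply at each value of $\tau$ all two-level unitaries with that stamp at once. Two points then need checking. (i) If $\tau(i_1,j_1)=\tau(i_2,j_2)$ with $j_1\neq j_2$ then $|i_1-i_2|\ge 2$, so the pairs $\{i_1-1,i_1\}$ and $\{i_2-1,i_2\}$ are disjoint and each $\tau$-layer is a legitimate batch. (ii) By induction on $\tau$ the schedule is consistent: the step for $(i,j)$ touches only rows $i-1,i$, and the time-stamp formula shows that in those rows everything strictly left of column $j$ has already been cleared while everything in column $j$ or further right except the target $(i,j)$ is still unprocessed (in particular the pivot $(i-1,j)$ is present); since a $2\times2$ unitary fixes $(0,0)$, the step creates the zero at $(i,j)$ without disturbing any previously created zero. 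As $\tau$ ranges over $\{1,\dots,2n-3\}$, the triangularisation uses $O(n)$ batches.

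After this sweep the running matrix is unitary and upper triangular, hence diagonal; a final batch of one-level unitaries $[\overline{d_k},k,n]$ cancels its phases, so $\Delta\,B_{2n-3}\cdots B_1\,U=I_n$ with each $B_\tau$ a commuting batch of two-level unitaries and $\Delta$ a commuting batch of one-level unitaries. Since the inverse of a two-level (resp.\ one-level) unitary on a pair (resp.\ level) is again of the same type on the same pair (resp.\ level), inverting gives $U=B_1^{-1}B_2^{-1}\cdots B_{2n-3}^{-1}\Delta^{-1}$, again an ordered product of commuting batches; reading this through the ordered-product notation yields $U=\oprod{i=1}{k}\prod_{M\in B_i}M$ with $k=2n-2=O(\dim U)$ and each $B_i$ a set of pairwise-commuting one-level and two-level unitaries, which is the claim.

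The part I expect to be the real work is the inductive consistency check (ii): one must show that the zero pattern built up so far always makes the next scheduled Givens step both well defined, with exactly the intended pivot, and non-destructive. This is classical in numerical linear algebra, but it needs care here because the two-level factors are fixed by the current matrix rather than in advance, so the argument is genuinely an induction coupling the schedule to the evolving zero pattern rather than a manipulation of a predetermined product.
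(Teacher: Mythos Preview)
Your proof is correct, but it takes a different route from the paper's. The paper keeps the ``direct'' pivots of Theorem~\ref{th:decompose_aux}---it kills the entry $(k_1,k_0)$ with a two-level unitary supported on levels $\{k_0,k_1\}$, not on adjacent levels---and batches them along \emph{off-antidiagonals}: all pairs with $k_0+k_1=s$ go into one batch. On a fixed antidiagonal the level sets $\{k_0,k_1\}$ are pairwise disjoint, so they commute, and an easy induction on $s$ shows that the already-created zeros survive (in rows $k_0,k_1$ every column $c<k_0$ has both $(k_0,c)$ and $(k_1,c)$ on a strictly earlier antidiagonal). This gives $2n-3$ triangularisation batches plus one diagonal batch, exactly your count. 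So your sentence that the direct pivots ``refuse to batch'' is not right: they refuse to batch \emph{within a column}, but they batch perfectly well across columns by constant $k_0+k_1$.

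That said, your adjacent-level Givens scheme with the wavefront schedule $\tau(i,j)=(n-i)+2j$ is a perfectly valid alternative and the consistency check you outline goes through. It even buys something the paper's argument does not: every two-level factor you produce is supported on a pair $\{i-1,i\}$ of \emph{neighbouring} levels. In the context of Section~\ref{sec:scatqca}, where the node register has only cycle connectivity, this is a natural fit and would let one realise each batch with nearest-neighbour gates directly, without the additional machinery of Theorem~\ref{th:transpile}. The paper's antidiagonal batching, by contrast, stays closer to Algorithm~\ref{algo:decompose_naive} and Theorem~\ref{th:decompose_aux}, merely re-scheduling the same factors.
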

\begin{proof}
	Using Theorem \ref{th:decompose_aux}, we eliminate all the non-diagonal coordinates of $U$ one by one, essentially computing $U^\dagger$. The condition to that is to compute the two-level unitaries in the right order. Indeed, as long as the coordinates left and top of target coordinate $(k_1,k_0)$ of Theorem \ref{th:decompose_aux} are zero, then they will remain zero after application of a two-level unitary of the form $[M,k_0,k_1,n]$. Furthermore, two two-level unitaries $[M_1,k_0,k_1,n]$ and $[M_1,l_0,l_1,n]$ commutes if $\{k_0,k_1\}\cap \{l_1,l_1\} = \emptyset$ as there non-trivial operations are on different subspaces. Hence, we can organize the batches of target coordinates to eliminate at the same time all coordinates on the same off-antidiagonals (off-diagonals parallel to the antidiagonal). One last batch of one-level unitaries needs to be computed in order to finish the decomposition. As there is $2n-3$ off-antidiagonals, there is $2n-2 = O(n)$ batches. We provide a pseudo-code of this new scheme in Algorithm \ref{algo:decompose_batch}.
	\begin{algorithm}
		\caption{Batched two-level decomposition}\label{algo:decompose_batch}
		\begin{algorithmic}
			\Require $U$
			\Ensure $n = \dim U$
			\State $U_1 \gets U$
			\State $k \gets 1$
			\For{$s$ in $0,\ldots,2n$}
			\State $B_k \gets \emptyset$
			\For{$i$ in $0,\ldots,n-1$}
			\For{$j$ in $i+1,\ldots,n-1$}
			\State $M' \gets [T_{U_k}, i, j, n]$ \Comment{According to Theroem \ref{th:decompose_aux}}
			\State $M \gets (M')^\dagger = [T_{U_k}^\dagger, i, j, n]$
			\State $B_k \gets B_k \cup \{M\}$
			\EndFor
			\EndFor
			\State $U_{k+1} \gets \big(\prod_{M\in B_k}M\big) \times U_k$
			\State $k \gets k+1$
			\EndFor
			\State $B_k \gets \emptyset$
			\For{$i$ in $0,\ldots,n-1$}
			\State $M \gets [(U_k)_{i,i}, i, n]$ \Comment{one-level unitary}
			\State $B_k \gets B_k \cup \{M\}$
			\EndFor
		\end{algorithmic}
	\end{algorithm}
\end{proof}

\subsubsection{Application of a two-levels unitary with a cycle quantum walk}\label{sec:2lvlqw}
Now we can show that the one-level and two-level unitaries may be rephrased in terms of a cyclic quantum walk. We define the quantum walk on the $n$ sized cycle with two Hilbert space: one for the position and one for the coin. Thus, the walker lives in  $\mathcal H_n \otimes \mathcal H_2$. In order to fully define a quantum walk, apart from the space, we also need to define the coin and scattering operations. Here the scattering moves the walker to the right when the coin state is $\ket{1}$ and keep the walker in-place otherwise. As for the coin, we make it space-time dependent. More formally,
\begin{description}
	\item[State:] $\displaystyle \ket{\psi_t} \in \mathcal H_n \otimes \mathcal H_2$,
	\item[Coin:] $\displaystyle C_t = \sum_{i=0}^{n-1}\ket{i}\bra{i} \otimes C_t(i)$,
	\item[Scattering:] $\displaystyle S = \sum_{i=0}^{n-1} \ket{i,0}\bra{i,0} + \ket{i+1,1}\bra{i,1}$,
	\item[Update rule at step $t$:] $\displaystyle \mathcal{QW}_t = S \times C_t$,
	\item[Dynamic from step $t_0$ to step $t_1$:] $\displaystyle \mathcal{QW}_{(t_0,t_1)} = \cprod{t=t_0}{t_1} \mathcal{QW}_t = \mathcal{QW}_{t_1} \times \ldots \times \mathcal{QW}_{t_0}$.
\end{description}

Our goal is, given a target unitary $T$ of dimension $2n$, to chose a time $t_f$ and the coins $\big(C_t(x)\big)_{\begin{matrix}0\leq t < t_f\\0\leq x < n\\\end{matrix}}$ such that $\mathcal{QW}_{(0,t_f-1)} = T$. Theorem \ref{th:transpile} shows how to do such a thing for a batch of two-level unitaries while Theorem \ref{th:transpile_diag} shows how to do it for diagonal matrices. 

\begin{theorem} \label{th:transpile}
	Let there be a set of two-level unitaries $\big\{T_k = [M_k,2x_{2k}+v_{2k},2x_{2k+1}+v_{2k+1},2n]\big\}_k$ with all $X = \{x_i\}_i$ different pairwise. For all $k$ and $0\leq t < 2n$, let 
	$$
	C_t(x_{2k}) = \left\{\begin{matrix}
		X & \text{if }\; (t\in\{0,n\}) \wedge (v_{2k}=1)\\
		M_k & \text{if }\; t=t_k \\
		I & \text{otherwise}
	\end{matrix}\right. ,
	$$
	$$
	C_t(x_{2k+1}) = \left\{\begin{matrix}
		X & \text{if }\; (t\in\{0,n\}) \wedge (v_{2k+1}=0)\\
		I & \text{otherwise}
	\end{matrix}\right. ,
	$$
	where 
	$$t_k = \left\{\begin{matrix}
		(x_{2k}-x_{2k+1}+n) \text{ mod } n & \text{if } x_{2k+1}\neq x_{2k}\\
		n & \text{if } x_{2k+1}=x_{2k}
	\end{matrix}\right. .
	$$
	It holds that $\mathcal{QW}_{(0,2n-1)} = \prod_{k}T_k$.
\end{theorem}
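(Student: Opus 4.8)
The plan is to verify the claimed identity on each of the $2n$ computational basis vectors of $\mathcal H_n\otimes\mathcal H_2\cong\mathbb C^{2n}$, writing $\ket{x}\ket{v}=\ket{2x+v}$. Two preliminary observations make this manageable. First, since $S\ket{i}\ket{0}=\ket{i}\ket{0}$ and $S\ket{i}\ket{1}=\ket{i+1}\ket{1}$ with indices taken modulo $n$, we have $S^{n}=I$; and by inspection of the prescribed coin schedule, $C_t$ is the identity operator for every $t$ with $n<t<2n$ (the $X$ gates occur only at $t\in\{0,n\}$, and each $t_k$ lies in $\{1,\dots,n-1\}$ because distinct levels sit at distinct positions, so $x_{2k}\neq x_{2k+1}$). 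Hence the last $n-1$ steps contribute $S^{n-1}$ and
$$\mathcal{QW}_{(0,2n-1)}=S^{n-1}\,(S\,C_n)\,\mathcal{QW}_{(0,n-1)}=S^{n}C_n\,\mathcal{QW}_{(0,n-1)}=C_n\,\mathcal{QW}_{(0,n-1)},$$
so it suffices to run the walk for its first $n$ steps and then apply the single coin layer $C_n$. Second, because the $x_i$ are pairwise distinct, the $2$-dimensional subspaces $\mathrm{Vec}\{\ket{2x_{2k}+v_{2k}},\ket{2x_{2k+1}+v_{2k+1}}\}$ are pairwise orthogonal, so $\prod_kT_k$ is well defined, equals $M_k$ on the $k$-th such subspace and the identity on the orthogonal complement. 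Comparing the images of the computational basis vectors under $C_n\,\mathcal{QW}_{(0,n-1)}$ with those under $\prod_kT_k$ therefore completes the proof.

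\textbf{Single two-level unitary.} I would first treat the case where only one $T_k=[M_k,2a+v,2b+w,2n]$ is present, with $a=x_{2k}$, $b=x_{2k+1}$ ($a\neq b$), $v=v_{2k}$, $w=v_{2k+1}$, $t_\ast=(a-b)\bmod n\in\{1,\dots,n-1\}$. The dynamics is a relay: the amplitude initially at $a$ becomes $\ket{a}\ket{0}$ after step $0$ (the endcap $X$ at $a$ fires iff $v=1$) and, with coin $I$, simply waits at $\ket{a}\ket{0}$ until step $t_\ast$; the amplitude initially at $b$ becomes $\ket{b}\ket{1}$ after step $0$ (the endcap $X$ at $b$ fires iff $w=0$), then $S$ sends it travelling rightward, and the value of $t_\ast$ is exactly what places it at $\ket{a}\ket{1}$ at the start of step $t_\ast$; the coin $M_k$ there mixes $\ket{a}\ket{0}$ and $\ket{a}\ket{1}$ by $M_k$; over the remaining free steps the $\ket{0}$-strand stays at $a$ while the $\ket{1}$-strand travels from $a+1$ and — again by the value of $t_\ast$, using $S^n=I$ — lands on $\ket{b}\ket{1}$ at step $n$; finally the endcap $X$ gates in $C_n$ restore the coin labels, $\ket{a}\ket{0}\mapsto\ket{a}\ket{v}$ and $\ket{b}\ket{1}\mapsto\ket{b}\ket{w}$. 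Reading off the coefficients gives exactly $[M_k,2a+v,2b+w,2n]$. For the unused coin states $\ket{a}\ket{1-v}$, $\ket{b}\ket{1-w}$ and for every non-special position $y$, I would check the orbit returns to itself: the step-$0$/step-$n$ endcaps either act trivially or flip-and-unflip, and travelling variants close up because $S^n=I$ and never meet the $M_k$ coin (their arrival time at $a$ is $(a-y)\bmod n$, which equals $t_\ast$ only for $y=b$).

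\textbf{Several two-level unitaries: non-interference.} I would then enable the whole batch by arguing the pairs do not disturb one another. The only spots at which the coin schedule is non-trivial are $(x_{2k},0)$ and $(x_{2k},n)$ (when $v_{2k}=1$), $(x_{2k},t_k)$, and $(x_{2k+1},0)$, $(x_{2k+1},n)$ (when $v_{2k+1}=0$) — all at special positions and all at times $\le n$. Starting the walk from one of pair $k$'s relevant basis vectors, its orbit during steps $0,\dots,n-1$ is carried by at most two strands, each of which (by the one-pair analysis) sits at $x_{2k}$, at $x_{2k+1}$, or at a non-special position; I would verify no such strand is ever at a position $x_{2j}$ or $x_{2j+1}$ of another pair at a step where that pair's coin is active. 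Every candidate collision unwinds, via $t_j=(x_{2j}-x_{2j+1})\bmod n$, into an equality between two of the $x_i$ — for instance "the strand travelling from $x_{2k+1}$ is at $x_{2j}$ at step $t_j$" forces $x_{2k+1}=x_{2j+1}$ — which is excluded by pairwise distinctness; and strands that would be relaunched at step $n$ never arise thanks to the $S^n=I$ reduction (all activity sits at steps $\le n$). Hence each orbit coincides with the one-pair orbit and $\mathcal{QW}_{(0,2n-1)}=\prod_kT_k$ on every basis vector. The step I expect to be the main obstacle is precisely this bookkeeping: organising the several phases of a strand's trajectory (wait at $x_{2k+1}$ or depart; travel to $x_{2k}$; get mixed; the resulting stay-strand and travel-strand) and checking uniformly that none of the finitely many (position, step) pairs it visits collides with another pair's active event; a minor side-check is that the schedule is well defined at all — no position receives two coin values at one step — which again reduces to $t_k\notin\{0,n\}$ and the distinctness of the $x_i$. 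I would also note how the degenerate within-pair case $x_{2k}=x_{2k+1}$ is handled (then $t_k=n$, the $M_k$ coin merges into the layer $C_n$, and the verification is shorter) should the hypothesis be read to permit it.
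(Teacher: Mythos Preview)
Your proposal is correct and follows the paper's approach: verify $\mathcal{QW}_{(0,2n-1)}=\prod_kT_k$ by tracking each computational basis vector through the walk, splitting into the three cases $y\notin X$, $y=x_{2k}$, and $y=x_{2k+1}$. Your preliminary reduction $\mathcal{QW}_{(0,2n-1)}=C_n\,\mathcal{QW}_{(0,n-1)}$ (via $S^n=I$ and $C_t=I$ for $n<t<2n$) and your explicit handling of the non-interference bookkeeping and of the unused coin states $\ket{x_{2k}}\ket{1-v_{2k}}$, $\ket{x_{2k+1}}\ket{1-v_{2k+1}}$ are refinements that the paper's terse three-case calculation leaves implicit.
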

\begin{proof}
	We apply $\mathcal{QW}_{(0,2n-1)}$ an all basis vectors and check the equality. There is three cases: $\ket{y}\ket{v}$, $\ket{x_{2k}}\ket{v}$, $\ket{x_{2k+1}}\ket{v}$ where $y\notin X$ and $v\in\{0,1\}$.
	\paragraph{Case 1:}
	\[
	\begin{split}
		\mathcal{QW}_{(0,2n-1)} \ket{y,v}&= \left(\cprod{t=0}{2n-1} S\times C_t\right) \times \ket{y,v}\\
		&= \left(\cprod{t=0}{2n-1} S\times I\right) \ket{y,v}\\
		&=  S^{2n} \ket{y,v}\\
		&=  \ket{y,v}\\
		&= \left(\prod_{i}T_i\right)\ket{y,v}.
	\end{split}
	\]
	
	\paragraph{Case 2:}
	\[
	\begin{split}
		\mathcal{QW}_{(0,2n-1)} \ket{x_{2k},v}&= \left(\cprod{t=0}{2n-1} S\times C_t\right) \times \ket{x_{2k},v}\\
		&= \left(\cprod{t=1}{2n-1} S\times C_t\right) \times S \times \ket{x_{2k},0}\\
		&= \left(\cprod{t=t_k}{2n-1} S\times C_t\right) \times \ket{x_{2k},0}\\
		&= \left(\cprod{t=t_k+1}{2n-1} S\times C_t\right)\times S \times M_k\ket{x_{2k},0}\\
		&= \left(\cprod{t=n}{2n-1} S\times C_t\right) \times M_k\ket{x_{2k},0}\\
		&= \left(\cprod{t=n+1}{2n-1} S\times I\right) \times S \times M_k\ket{x_{2k},v}\\
		&= S^n \times M_k\ket{x_{2k},v}\\
		&= M_k\ket{x_{2k},v}\\
		&= \left(\prod_{i}T_i\right) \ket{x_{2k},v}.
	\end{split}
	\]
	\newpage
	
	\paragraph{Case 3:}
	\[
	\begin{split}
		\mathcal{QW}_{(0,2n-1)} \ket{x_{2k+1},v}&= \left(\cprod{t=0}{2n-1} S\times C_t\right) \times \ket{x_{2k+1},v}\\
		&= \left(\cprod{t=1}{2n-1} S\times C_t\right) \times S \times \ket{x_{2k+1},1}\\
		&= \left(\cprod{t=t_k}{2n-1} S\times C_t\right) \times \ket{x_{2k+1}+t_k,1}\\
		&= \left(\cprod{t=t_k}{2n-1} S\times C_t\right) \times \ket{x_{2k},1}\\
		&= \left(\cprod{t=t_k+1}{2n-1} S\times C_t\right)\times S \times M_k\ket{x_{2k},1}\\
		&= \left(\cprod{t=n}{2n-1} S\times C_t\right) \times M_k\ket{x_{2k}+n-tk,1}\\
		&= \left(\cprod{t=n}{2n-1} S\times C_t\right) \times M_k\ket{x_{2k+1},1}\\
		&= \left(\cprod{t=n+1}{2n-1} S\times I\right) \times S \times M_k\ket{x_{2k+1},v}\\
		&= S^n \times M_k\ket{x_{2k+1},v}\\
		&= M_k\ket{x_{2k+1},v}\\
		&= \left(\prod_{i}T_i\right) \ket{x_{2k+1},v}.
	\end{split}
	\]

\end{proof}

\begin{theorem} \label{th:transpile_diag}
	Let there be a diagonal unitary $D$ of size $2n\times 2n$.
	For all $0\leq x < n$ and $0\leq t < n$, let 
	$$
	C_t(x) = \left\{\begin{matrix}
		\left(\begin{matrix}
			D_{2x} & 0\\
			0 & D_{2x+1}\\
		\end{matrix}\right) & \text{if }\; (t=0) \\
		I & \text{otherwise}
	\end{matrix}\right. .
	$$
	It holds that $\mathcal{QW}_{(0,2n-1)} = D$.
\end{theorem}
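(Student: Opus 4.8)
The plan is to notice that, among all the coins entering the dynamic $\mathcal{QW}_{(0,2n-1)}$, only $C_0$ differs from the identity, so the whole evolution collapses to a single coin followed by a power of the scattering. Concretely, since $C_t(x)=I$ for every $x$ whenever $t\neq 0$, we have $\mathcal{QW}_t = S\times C_t = S$ for $t=1,\dots,2n-1$, whence
$$
\mathcal{QW}_{(0,2n-1)} = \mathcal{QW}_{2n-1}\times\cdots\times\mathcal{QW}_1\times\mathcal{QW}_0 = S^{2n-1}\times S\times C_0 = S^{2n}\times C_0 .
$$

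Next I would evaluate $S^{2n}$. The scattering $S$ leaves the subspace $\mathrm{Vec}\{\ket{i,0}\}_i$ pointwise fixed and acts on $\mathrm{Vec}\{\ket{i,1}\}_i$ as the cyclic shift $\ket{i,1}\mapsto\ket{i+1,1}$ (positions taken mod $n$), which has order $n$. Hence $S^n=I$ on both subspaces, so a fortiori $S^{2n}=I$, and therefore $\mathcal{QW}_{(0,2n-1)} = C_0$.

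It then remains to identify $C_0$ with $D$. By definition $C_0 = \sum_{x=0}^{n-1}\ket{x}\bra{x}\otimes\mathrm{diag}(D_{2x},D_{2x+1})$, so $C_0\,\ket{x}\ket{v} = D_{2x+v}\,\ket{x}\ket{v}$ for $v\in\{0,1\}$. Under the identification of the basis state $\ket{x}\ket{v}$ of $\mathcal H_n\otimes\mathcal H_2$ with the basis vector of index $2x+v$ in dimension $2n$ — the very convention already used in Theorem~\ref{th:transpile} and in Section~\ref{sec:decompUnit} — this is exactly the action of $D=\sum_{m=0}^{2n-1}D_m\ket{m}\bra{m}$. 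Hence $\mathcal{QW}_{(0,2n-1)} = D$.

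There is no real obstacle here: one applies $\mathcal{QW}_{(0,2n-1)}$ to the basis vectors $\ket{x}\ket{v}$ exactly as in the proof of Theorem~\ref{th:transpile}, and the only thing to watch is the index bookkeeping — checking that $2n$ scatterings return every position–coin pair to its starting point (so that $S^{2n}=I$ and $C_0$ is left untouched), and that the pairing $(x,v)\leftrightarrow 2x+v$ is used consistently with the two-level decomposition. The reason $2n$ steps are prescribed rather than the bare minimum $n$ is uniformity: every elementary block — a batch of two-level unitaries (Theorem~\ref{th:transpile}) or a diagonal factor (this theorem) — is realized by a cyclic walk of the same length $2n$, so the blocks compose without any rescheduling.
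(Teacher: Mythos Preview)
Your proof is correct and follows essentially the same route as the paper: identify $C_0$ with $D$, collapse the remaining steps to $S^{2n}$, and use $S^n=I$ to conclude. Your write-up is in fact a bit more explicit than the paper's, which leaves the verification of $S^{2n}=I$ and the $(x,v)\leftrightarrow 2x+v$ identification implicit.
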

\begin{proof}
	First note that 
	$$
	C_0 = \sum_{x=0}^{n-1} \ket{x}\bra{x} \otimes \left(\begin{matrix}D_{2x} & 0\\0 & D_{2x+1}\\\end{matrix}\right) = D.
	$$
	
	It holds that
	\[
	\begin{split}
		\mathcal{QW}_{(0,2n-1)} &= \cprod{t=0}{2n-1} \mathcal{QW}_t\\
		&= \cprod{t=0}{2n-1} (S \times C_t)\\
		&= \left(\cprod{t=1}{2n-1} (S \times C_t)\right) \times S \times C_0\\
		&= \left(\cprod{t=1}{2n-1} S \right) \times S \times D\\
		&= \left(\cprod{t=0}{2n-1} S \right) \times D\\
		&=  D\\
	\end{split}
	\]
\end{proof}

Combining those two results with Algorithm \ref{algo:decompose_batch}, we can, for a target unitary $T$, compute the coins $C_t(x)$ such that $U_{(0,t_f-1)} = T$, where $t_f = \underbrace{2n}_{\text{cost of a batch}}\times\underbrace{(8n-5)}_{\text{number of batches}}$. The way to compute those coins is described in Algorithm \ref{algo:transpile}.

\begin{algorithm}
	\caption{Batched two-level application}\label{algo:transpile}
	\begin{algorithmic}
		\Require $U$
		\Ensure $n = \dim U$
		\State $U_1 \gets U$
		\State $k \gets 1$
		\For{$s$ in $0,\ldots,2n$} \Comment{Lists the off-antidiagonals}
		\For{$p$ in $0,1$} \Comment{Ensures that all unitaries of the same batch acts on different position}
		\State $B_k \gets \emptyset$
		\For{$i$ in $0,\ldots,n-1$}
		\For{$j$ in $i+1,\ldots,n-1$}
		\State $M' \gets [T_{U_k}, i, j, n]$ \Comment{According to Theroem \ref{th:decompose_aux}}
		\State $M \gets (M')^\dagger = [T_{U_k}^\dagger, i, j, n]$
		\State $B_k \gets B_k \cup \{M\}$
		\EndFor
		\EndFor
		\State $U_{k+1} \gets \big(\prod_{M\in B_k}M\big) \times U_k$
		\State $k \gets k+1$
		\EndFor
		\EndFor
		\State $\big(C_t(x)\big)_{0\leq t < 2n-1} \gets \textsc{Transpile}(U_k)$ \Comment{Compute $C_t(x)$ for target $U_k$ according to Theorem \ref{th:transpile_diag}}
		\For{$i$ in $k-1,\ldots,0$}
		\State $\big(C_t(x)\big)_{2nk\leq t < 2n(k+1)-1} \gets \textsc{Transpile}(U_i)$ \Comment{Compute $C_t(x)$ for target $U_i$ according to Theorem \ref{th:transpile}}
		\EndFor
	\end{algorithmic}
\end{algorithm}

\subsubsection{Scattering on the cycle}\label{sec:scatqca}
While previous section shows how to apply an unitary via a quantum walk, it still does not solve the original problem of applying a distributed scattering. In order to do that, we provide a distributed implementation of the previous cycle quantum walk. We recall that according to Definition \ref{def:scatter}, we want to apply $\Lambda(U)$ on the qubits $a_0,\ldots,a_
{d-1}$, for a scattering operator $U$.

Let $\big(C_t(x)\big)_{\begin{matrix}0\leq t < t_f\\0\leq x < d\\\end{matrix}}$ be the coins of the cycle quantum walk $\mathcal{QW}$ of size $d$ such that 
$$
\displaystyle \mathcal{QW}_{(0,t_f-1)} = \cprod{t=0}{t_f-1} \big(S \times \sum_{i=0}^{n-1}\ket{i}\bra{i} \otimes C_t(i)\big) =T,
$$
where 
$$
T = U \otimes \ket{0}\bra{0} + I_d \otimes \ket{1}\bra{1} \quad \text{and} \quad t_f = 
$$

We also introduce the anscillary qubits $b_0,\ldots,b_{d-1}$ and $c_0,\ldots,c_{d-1}$. Those are the node register. We suppose the following connectivity: 
$$
\forall i, \; a_i \sim b_i \sim c_i \sim b_{i+1}, \quad \text{where } \sim \text{ means "is connected to"}.
$$

This connectivity, as well as the position of the qubits, is illustrated in Figure \ref{fig:node_register}.

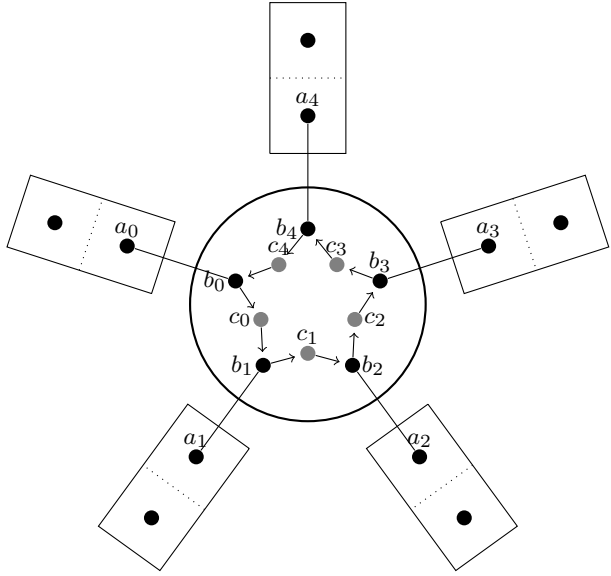
\begin{figure}
	\centering
	\begin{tikzpicture}[bullet/.style={circle, fill, inner sep=2pt}]
		\draw[thick] (0,0) circle [radius=1.55];
		\foreach \i in {0,...,4} {
			\begin{scope}[rotate=72*\i-18]
				\draw (-2,-0.5) -- (-2,0.5) -- (-4,0.5) -- (-4,-0.5) -- (-2,-0.5);
				\draw[dotted] (-3,-0.5) -- (-3,0.5);
				\def \Pa {(-2.5,0)};
				\def \Pb {(-1,0)};
				\def \Pc {(-0.65,0)};
				\def \Px {(-3.5,0)};
				\pgfmathparse{72*\i-18};\pgfmathparse{int(\pgfmathresult)};\edef \angle {\pgfmathresult};
				\pgfmathparse{3*\i};\pgfmathparse{int(\pgfmathresult)};\edef \a {\pgfmathresult};
				\pgfmathparse{3*\i+1};\pgfmathparse{int(\pgfmathresult)};\edef \b {\pgfmathresult};
				\pgfmathparse{3*\i+2};\pgfmathparse{int(\pgfmathresult)};\edef \c {\pgfmathresult};
				\node[bullet] (x\i) at \Px {};
				\node[bullet] (\a) at \Pa {};
				\node[above] at \Pa {\scriptsize{$a_\i$}};
				\node[bullet] (\b) at \Pb {};
				\ifthenelse{\angle < 90 \AND \angle >-19}{\node[left] at \Pb {\scriptsize{$b_\i$}};}{
					\ifthenelse{\angle < 144 \AND \angle >54}{\node[right] at \Pb {\scriptsize{$b_\i$}};}{
						\ifthenelse{\angle < 220 \AND \angle >144}{\node[above] at \Pb {\scriptsize{$b_\i$}};}{
							\ifthenelse{\angle < 360 \AND \angle >280}{\node[left] at \Pb {\scriptsize{$b_\i$}};}{
								\ifthenelse{\angle < 280 \AND \angle >260}{\node[left] at \Pb {\scriptsize{$b_\i$}};}{
									\node[above] at \Pb {\scriptsize{$b_\i$}};}}}}}
				\begin{scope}[rotate=36]
					\node[bullet,gray] (\c) at (-0.65,0) {};
					\ifthenelse{\angle < 54 \AND \angle >-19}{\node[left] at \Pc {\scriptsize{$c_\i$}};}{
						\ifthenelse{\angle < 144 \AND \angle >54}{\node[right] at \Pc {\scriptsize{$c_\i$}};}{
							\ifthenelse{\angle < 234 \AND \angle >144}{\node[above] at \Pc {\scriptsize{$c_\i$}};}{
								\ifthenelse{\angle < 360 \AND \angle >280}{\node[left] at \Pc {\scriptsize{$c_\i$}};}{
									\node[above] at \Pc {\scriptsize{$c_\i$}};}}}}
				\end{scope}
			\end{scope}
		}
		\foreach \i in {0,...,4} {
			\pgfmathparse{3*\i};\pgfmathparse{int(\pgfmathresult)};\edef \a {\pgfmathresult};
			\pgfmathparse{3*\i+1};\pgfmathparse{int(\pgfmathresult)};\edef \b {\pgfmathresult};
			\pgfmathparse{3*\i+2};\pgfmathparse{int(\pgfmathresult)};\edef \c {\pgfmathresult};
			\pgfmathparse{3*\i+4};\pgfmathparse{int(\pgfmathresult)};\edef \d {\pgfmathresult};
			\ifthenelse{\i=4}{\def \d {1};}{}
			\draw (\a) -- (\b);
			\draw[->,shorten >=2pt] (\b) -- (\c);
			\draw[->,shorten >=2pt] (\c) -- (\d);
		}
	\end{tikzpicture}
	\caption{Connectivity of the qubits around a node.}
	\label{fig:node_register}
\end{figure}

We want to apply $\Gamma(T)$ on $a_0,b_0,a_1,b_1,\ldots,a_{d-1},b_{d-1}$. 

We note $M_{[q_1,\ldots,q_n]}$ the unitary operator $M$ applied on qubits $q_1,\ldots,q_n$.

It holds that 
$$
\Lambda(C_t)_{[a_0,b_0,\ldots,a_{d-1},b_{d-1}]} \otimes I_{[c_0,\ldots,c_{d-1}]} = \left(\bigotimes_{i=0}^{d-1} \underbrace{\Lambda(C_t(i))_{[a_i,b_i]}}_{\text{applied on qubits $a_i$ and $b_i$}}\right) \otimes I_{[c_0,\ldots,c_{d-1}]}.
$$
As $C_t(i)$ are all $2\times 2$ unitary and according to Theroem \ref{th:circuit_coin}, $\Lambda(C_t(i))$ admits a circuit with depth $O(1)$. The two-qubits gates we apply are only between $a_i$ and $b_i$ $\forall i$, which are connected.

It holds that
$$
\Lambda(S)_{[a_0,b_0,\ldots,a_{d-1},b_{d-1}]} \otimes A_{[c_0,\ldots,c_{d-1}]}  = \left[\left(\bigotimes_{i=0}^{d-1} \underbrace{\text{SWAP}_{[b_i,c_i]}}_{\text{swap $b_i$ and $c_i$}}\right) \times \left(\bigotimes_{i=0}^{d-1} \underbrace{\text{SWAP}_{[c_i,b_{i+1}]}}_{\text{swap $c_i$ and $b_{i+1}$}}\right)\right] \otimes I_{[a_0,\ldots,a_{d-1}]},
$$
where 
$$
A\ket{0,\ldots,0} = \ket{0,\ldots,0} \quad \text{and} \quad A\ket{\delta_i^n} = \ket{\delta_{i-1}^n}.
$$
We need to apply two swaps, one between $b_i$ and $c_i$, and the other one between $c_i$ and $b_{i+1}$. This respects the connectivity between qubits.

We note 
$$\Gamma(\mathcal{QW}_t) =
\left(\Lambda(S)_{[a_0,b_0,\ldots,a_{d-1},b_{d-1}]} \otimes A_{[c_0,\ldots,c_{d-1}]}\right) \times \left(\Lambda(C_t)_{[a_0,b_0,\ldots,a_{d-1},b_{d-1}]} \otimes I_{[c_0,\ldots,c_{d-1}]}\right).$$

It holds that
\[
\begin{split}
	\cprod{t=t_0}{t_0+d}\Lambda(\mathcal{QW}_t) &= \cprod{t=t_0}{t_0+d}\left(\Lambda(S\times C_t)_{[a_0,b_0,\ldots,a_{d-1},b_{d-1}]} \otimes A_{[c_0,\ldots,c_{d-1}]}\right)\\
	&= \left(\cprod{t=t_0}{t_0+d}\Lambda(S\times C_t)_{[a_0,b_0,\ldots,a_{d-1},b_{d-1}]}\right) \otimes \left(\cprod{t=t_0}{t_0+d} A_{[c_0,\ldots,c_{d-1}]}\right)\\
	&= \Lambda\left(\cprod{t=t_0}{t_0+d}(S\times C_t)\right)_{[a_0,b_0,\ldots,a_{d-1},b_{d-1}]} \otimes I_{[c_0,\ldots,c_{d-1}]}\\
	& \\
	&= \Lambda\left(\mathcal{QW}_{(t_0,t_0+d)}\right)_{[a_0,b_0,\ldots,a_{d-1},b_{d-1}]} \otimes I_{[c_0,\ldots,c_{d-1}]}.\\
\end{split}
\]
Therefor, the unary extension $\Lambda(\mathcal{QW})$ of the quantum walk $\mathcal{QW}$ is defined by
$$
\Lambda(\mathcal{QW}) = \cprod{t=0}{t_f} \Lambda(\mathcal{QW}_t),
$$
and satisfies
$$
\Lambda(\mathcal{QW}) = \Lambda(T_{[a_0,b_0,\ldots,a_{d-1},b_{d-1}]}) \otimes I_{[c_0,\ldots,c_{d-1}]} = 
\Lambda(U_{[a_0,\ldots,a_{d-1}]}) \otimes I_{[b_0,c_0,\ldots,b_{d-1},c_{d-1}]}.
$$
Furthermore, the unary extension $\Lambda(\mathcal{QW})$ admits a circuit of depth $O(d^2)$ that respects the connectivity $\forall i, \; a_i \sim b_i \sim c_i \sim b_{i+1}$ illustrated in Figure \ref{fig:node_register}. This circuit's template is provided in Figure \ref{fig:circuit_scatter}.

\begin{figure}
	\centering
	\Qcircuit @C=1em @R=.7em {
		b_0 	& & \targ 		& \qw 		& \qw 		& \ctrl{7} 		& \qw 			& \qw 			& \qw 		& \qw 		& \targ 	& \qw & \qswap 		& \qw 		 & \qswap 		& \qw \inputgroupv{3}{4}{5em}{0.5em}{\text{Node register}\hspace{9em}}\\
		c_0 	& & \qw 		& \qw 		& \qw 		& \qw 			& \qw 			& \qw 			& \qw 		& \qw 		& \qw 		& \qw & \qswap\qwx 	& \qswap 	 & \qw\qwx 		& \qw \\
		b_i 	& & \qw 		& \targ 	& \qw 		& \qw 			& \ctrl{6} 		& \qw 			& \qw 		& \targ 	& \qw 		& \qw & \qswap 		& \qswap\qwx & \qw\qwx 		& \qw \\
		c_i 	& & \qw 		& \qw 		& \qw 		& \qw 			& \qw 			& \qw 			& \qw 		& \qw 		& \qw 		& \qw & \qswap\qwx 	& \qswap 	 & \qw\qwx 		& \qw \\
		b_{d-1} & & \qw 		& \qw 		& \targ 	& \qw 			& \qw 			& \ctrl{5} 		& \targ 	& \qw 		& \qw 		& \qw & \qswap 		& \qswap\qwx & \qw\qwx 		& \qw \\
		c_{d-1} & & \qw 		& \qw 		& \qw 		& \qw 			& \qw 			& \qw 			& \qw 		& \qw 		& \qw 		& \qw & \qswap\qwx 	& \qw 		 & \qswap\qwx 	& \qw \\
		& & 			& 			& 			& 				& 				& 				& 			& 			& 			& 	  & 			& 			 & 				&		\\
		a_0 	& & \ctrl{-7} 	& \qw 		& \qw 		& \gate{C_t(0)}	& \qw 			& \qw 			& \qw 		& \qw 		& \ctrl{-7} & \qw & \qw 		& \qw 	 	 & \qw 			& \qw \\
		a_i 	& & \qw 		& \ctrl{-6} & \qw 		& \qw 			& \gate{C_t(i)} & \qw 			& \qw 		& \ctrl{-6} & \qw 		& \qw & \qw 		& \qw 	 	 & \qw 			& \qw \inputgroupv{9}{9}{5em}{0em}{\text{Edge register}\hspace{9em}}\\
		a_{d-1} & & \qw 		& \qw 		& \ctrl{-5} & \qw 			& \qw 			& \gate{C_t(d-1)} & \ctrl{-5} & \qw 		& \qw 		& \qw & \qw 		& \qw 	 	 & \qw 			& \qw \\
	}
	\caption{Circuit template for $\Lambda(\mathcal{QW}_t)$}
	\label{fig:circuit_scatter}
\end{figure}
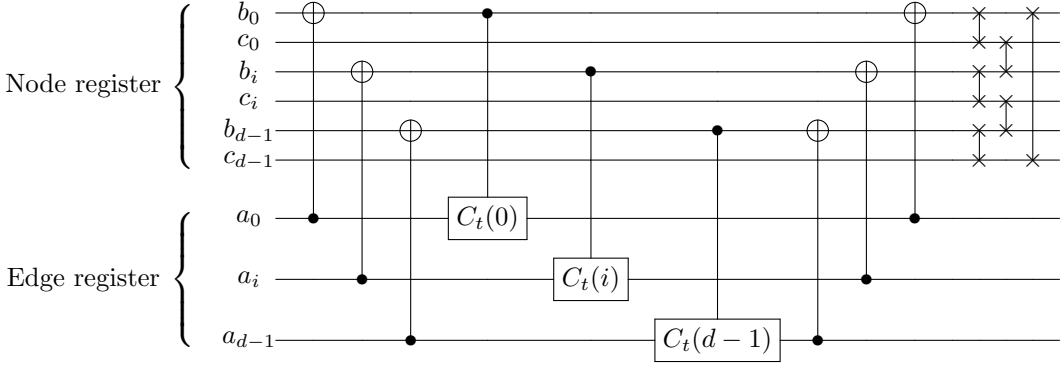

\subsection{Recap}
We introduced a model of distributed quantum computation following the connectivity of a given graph. Two connectivities for the node registers have been considered: all-to-all and cycle. Furthermore, we showed that this model can be used to reproduce the dynamic of the quantum walk on graphs introduced in Section \ref{sec:model}. There are two distributed protocols to reproduce this dynamic, one per connectivity considered for the node registers. Table \ref{tab:distrib} shows a comparison between the two. In this table, the cost (number of two qubits gates) of the circuit inside an edge, the cost of the circuit inside a node and the cost between edges and nodes (for one edge and one node of degree $d$) are all listed. Unsurprisingly, the first method with all-to-all connectivity on the nodes performs better than the QCA-like approach. However, the latter only requires cycle connectivity on the nodes. 

Moreover, the communication cost between two qubits of one edge is always $O(1)$ swaps, and the communication cost between one edge and one node is $O(\ln d) = O(\ln n)$ CNOT for all-to-all connectivity and $O(d) = O(n)$ CNOT for cycle connectivity. Therefore, in general, a cycle connectivity fits the LOCAL model (i.e. $O(\text{poly}(n))$ communication cost between two nodes each round) while one can fit the CONGEST model (i.e. $O(\log n)$ communication cost between two nodes each round) at the cost of all-to-all connectivity on the nodes. Note that this is for general topologies. In practice, a grid or an hypercube fits the CONGEST model even with cycle connectivity.

\begin{table}
	\begin{tabular}{c|ccccc}
		Node connectivity & Edge register & Node register & Edge cost & Node cost & Edge/node cost \\
		\hline
		\rule{0pt}{4ex}  
		all-to-all & $2$ qubits & $O\left(\log d\right)$ qubits & $O\left(1\right)$ & $O\left(d + \mathcal{C}(D)\right)$ & $O\left(\ln d\right)$ CNOT \\
		cycle & $2$ qubits & $O\left(d\right)$ qubits & $O\left(1\right)$ & $O\left(d^2\right)$ & $O\left(d\right)$ SWAP \\
	\end{tabular}
	\caption{Comparison table between the two methods. All information are given for one arbitrary edge or one arbitrary node during one step of the QW's dynamic. $d$ is the degree of said node. $\mathcal{C}(D)$ is the cost (depth) of the diffusion operator $D$. Last column gives the amount of operations between one edge and one node.}
	\label{tab:distrib}
\end{table}
	
	\section{Application: Searching}\label{sec:application}
This section shows how the quantum walk of section \ref{sec:model} can be used for searching. In this section, we only consider the quantum walk and not its distributed implementation presented in section \ref{sec:distrib}. 
This is because the dynamic is the same and studying the mathematical model of quantum walk is somewhat easier.
We first detail how to search edges, before introducing a trick to search nodes. We conclude with some numerical experiments.

\subsection{Searching an edge}
Our model formally describes a quantum walk on the graph's edges. If the quantum state is measured, we obtain one of the edges of the graph. This scheme is thus well suited to search edges. Thus we may now introduce an oracle (i.e. a black box able to recognize/mark the solution to the searching problem) marking the desired edge $(a,b) \in E$. Analogously with the standard spatial search, the oracle is defined as follows~:
$$
\mathcal O_{f} =\left(\sum_{f((u,v))=1}\ket{(u,v)}\bra{(u,v)}\right) \otimes R + \left(\sum_{f((u,v))=0}\ket{(u,v)}\bra{(u,v)}\right)\otimes I_2, 
$$
where $f$ is the classical oracle equals to $1$ if and only if the edge is marked, and $0$ otherwise. The operator $R$ is a special coin operator which is applied only to the marked edge. Without lack of generality, in the following we set $C=X$ and $R=-X$. The algorithm proposed here is the following~: 
\begin{algorithm}
	\caption{Search a marked edge}\label{algo:search}
	\begin{algorithmic}[1]
		\Require $G = (V,E)$ undirected, connected graph
		\Require $f$ the classical oracle
		\Require $T\in \mathbb{N}$ the hitting time
		\Function{Search}{$G,f,T$}
		\State $\displaystyle \ket{\Phi} \gets \frac{1}{\sqrt{2|E|}}\sum_{(u,v)\in E} \ket{(u,v)}\ket{+} + \ket{(u,v)}\ket{-}$ \Comment{Diagonal initial state.}
		\For{$0\leq i < T$}
		\State $\displaystyle \ket{\Phi} \gets S \times (I\otimes C) \times \mathcal O_{f} \times \ket{\Phi}$ \Comment{One step of the walk}
		\EndFor
		\State $(u,v,\pm) \gets$ \Call{Measure}{$\ket{\Phi}$} \Comment{Measure the quantum state.}
		\State \Return $(u,v)$
		\EndFunction
	\end{algorithmic}
\end{algorithm}
The initial state is initialized to be diagonal on the basis states and the additional oracle operation is added to the former QW-dynamic. There are two parameters that characterize the above algorithm~: the probability of success $P$ of returning the marked edge and the hitting time $T$. The number of oracle calls of this algorithm is $O(T)$. In practice, we want to choose $T$ such that $P$ is maximal.
The previous algorithm can easily be made into a Las Vegas algorithm whose answer is always correct but the running time random. Algorithm \ref{algo:guaranteed_search} shows such transformation. Interestingly, the expected number of times Algorithm \ref{algo:guaranteed_search} calls Algorithm \ref{algo:search} is $O(1/P)$ where $P$ is the probability of success of Algorithm \ref{algo:search}. The expected complexity of Algorithm \ref{algo:guaranteed_search} is $O\left(T/P\right).$
The complexity of our algorithm is limited by the optimal complexity of $O(\sqrt{K})$ for searching problem where $K$ is the total number of elements. This is the complexity of the Grover algorithm which has been shown to be optimal \cite{zalka1999grover}. However Grover's algorithm assume full connectivity between elements which is not our case. The quantum walk presented here is on the edge of $G$, which means that the optimal complexity is $O\left(\sqrt{|E|}\right)$.

\begin{algorithm}
	\caption{Search a marked edge with guaranteed success}\label{algo:guaranteed_search}
	\begin{algorithmic}[1]
		\Require $G = (V,E)$ undirected, connected graph
		\Require $f$, the classical oracle
		\Require $T\in \mathbb{N}$ the hitting time
		\Function{GuaranteedSearch}{$G,f,T$}
		\State $(u,v) \gets (\texttt{nil},\texttt{nil})$ \Comment{Initial value}
		\While{$f((u,v)) \neq 1$} \Comment{Until we find the marked edge ...}
		\State $(u,v) \gets$ \Call{Search}{$G,f,T$} \Comment{... search again.}
		\EndWhile
		\State \Return $(u,v)$
		\EndFunction
	\end{algorithmic}
\end{algorithm}

\subsection{Example : Searching an Edge in the Star Graph}
In this example we consider the star graph with $M$ edges and $M+1$ nodes; a graph with $M+1$ nodes where every node (other than node $u_0$) is connected to node $u_0$. Our searching algorithm performs well on this graph, as shown by Theorem \ref{th:star}. In this section, we show the proof of Theorem \ref{th:star} which is mainly spectral analysis on a simplified dynamic.

\begin{theorem}\label{th:star}
	Let $G$ be the star graph with $M$ edges. Algorithm \ref{algo:search} has an optimal hitting time $T=O\left(\sqrt{M}\right)$ and a probability of success $O(1)$ for $G$. Algorithm \ref{algo:guaranteed_search} has expected complexity $O\left(\sqrt{M}\right)$.
\end{theorem}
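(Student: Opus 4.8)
The plan is to exploit the $S_{M-1}$ automorphism group of the star (permutations of the unmarked leaves) to collapse the $2M$-dimensional walk onto a tiny invariant subspace, and then do spectral analysis there. First I would fix the polarity: since the star is bipartite I may take $\sigma(u_0,u_i)=+$ uniformly at the centre, so that the centre's Grover diffusion $D_M$ acts on the vector of ``$+$-amplitudes'' $p_i:=\psi_{u_0,u_i}^{+}$, while each leaf has degree $1$ and applies $D_1=I$, i.e.\ does nothing. Writing $q_i:=\psi_{u_0,u_i}^{-}$ and letting $e_m$ be the marked edge, one checks that the one-step operator $W=S\,(I\otimes C)\,\mathcal O_f$ with $C=X$, $R=-X$ acts as follows: the oracle followed by the coin is $CR=-I$ on the marked edge (sending $(p_m,q_m)\mapsto-(p_m,q_m)$) and is the swap $C=X$ on every other edge ($(p_i,q_i)\mapsto(q_i,p_i)$); then $S$ replaces $(p_i)_i$ by $D_M(p_i)_i$ and leaves $(q_i)_i$ untouched.

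Next I would observe that $W$ commutes with permutations of the unmarked leaves and that the initial diagonal state is invariant under them, so the whole evolution stays in the $4$-dimensional space spanned by $\mathbf u_1=\frac1{\sqrt{M-1}}\sum_{i\neq m}\ket{e_i}\ket{+}$, $\mathbf u_2=\frac1{\sqrt{M-1}}\sum_{i\neq m}\ket{e_i}\ket{-}$, $\mathbf u_3=\ket{e_m}\ket{+}$, $\mathbf u_4=\ket{e_m}\ket{-}$. A direct computation of $W$ on this orthonormal basis shows $\mathbf u_4$ decouples with $W\mathbf u_4=-\mathbf u_4$, while on $\mathrm{span}(\mathbf u_1,\mathbf u_2,\mathbf u_3)$ the matrix of $W$ is
$$ A=\begin{pmatrix}0 & c & -s\\ 1 & 0 & 0\\ 0 & s & c\end{pmatrix},\qquad c=\frac{M-2}{M},\quad s=\frac{2\sqrt{M-1}}{M},\quad c^{2}+s^{2}=1, $$
which is real orthogonal with characteristic polynomial $(\lambda+1)(\lambda^{2}-(c+1)\lambda+1)$. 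Hence its eigenvalues are $-1$ and $e^{\pm i\phi}$ with $\cos\phi=\frac{c+1}{2}=1-\frac1M$, so $\phi=\arccos(1-\tfrac1M)=\Theta(1/\sqrt M)$, and $W$ restricted to the plane $P$ orthogonal to the $(-1)$-eigenvector $v_{-1}$ is a rotation by the angle $\phi$ at each step.

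Then I would carry out the elementary overlap estimates. The reduced initial state is $v_0=(\sqrt{\tfrac{M-1}{2M}},\sqrt{\tfrac{M-1}{2M}},\tfrac1{\sqrt{2M}})$ plus a $\tfrac1{\sqrt{2M}}$ component on the decoupled $\mathbf u_4$; one gets $|\langle v_{-1},v_0\rangle|^{2}=O(1/M^{2})$ and $|\langle v_{-1},\mathbf u_3\rangle|^{2}=O(1/M)$, so both $v_0$ and the ``marked'' direction $\mathbf u_3$ lie in $P$ up to $O(1/M)$, while $|\langle v_0,\mathbf u_3\rangle|^{2}=\tfrac1{2M}$ shows that inside $P$ they are nearly orthogonal (angle $\tfrac\pi2-O(1/\sqrt M)$). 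Rotating by $\phi$ per step, after $T$ equal to the nearest integer to $\pi/(2\phi)=\Theta(\sqrt M)$ the vector $W^{T}v_0$ is aligned with $\mathbf u_3$ up to an angular error $O(1/\sqrt M)$; hence the success probability $|\langle\mathbf u_3,W^{T}v_0\rangle|^{2}+|\langle\mathbf u_4,W^{T}v_0\rangle|^{2}$ equals $1-O(1/M)=\Omega(1)$. Optimality of $T=O(\sqrt M)$ is then immediate from the general $\Omega(\sqrt{|E|})=\Omega(\sqrt M)$ lower bound for search. Finally, substituting $P=\Omega(1)$ and $T=O(\sqrt M)$ into the bound $O(T/P)$ for Algorithm~\ref{algo:guaranteed_search} established earlier yields the stated $O(\sqrt M)$ expected complexity.

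I expect the main obstacle to be controlling the spurious $(-1)$-eigenvalue: one must verify that neither the initial state nor $\mathbf u_3$ carries more than $O(1/M)$ weight on $v_{-1}$ and then propagate these $O(1/M)$ errors, together with the integer rounding of $\pi/(2\phi)$, all the way to a genuine \emph{constant} lower bound on $P$ (rather than one that could secretly decay with $M$). The only other delicate point is getting the reduced matrix $A$ exactly right — in particular, under the chosen polarity, correctly accounting for which amplitude of each edge the centre's $D_M$ acts on and which the trivial leaf scattering acts on — since a single sign or index slip there would invalidate the spectral picture.
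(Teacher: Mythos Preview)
Your proposal is correct and follows essentially the same approach as the paper: exploit the $S_{M-1}$ symmetry of the unmarked leaves to collapse the dynamics onto a four-dimensional invariant subspace, observe that the marked-minus direction decouples with eigenvalue $-1$, and perform spectral analysis on the remaining $3\times 3$ block to extract the rotation angle $\phi$ with $\cos\phi=1-\tfrac1M$. The only cosmetic difference is that you work in an orthonormal basis (so your reduced matrix is genuinely orthogonal, with off-diagonal entries $\pm s=\pm\tfrac{2\sqrt{M-1}}{M}$), whereas the paper uses the non-normalised ``representative amplitude'' coordinates $(\alpha^+,\alpha^-,\psi_{u_0,u_1}^+)$; the two matrices are conjugate by $\mathrm{diag}(\sqrt{M-1},\sqrt{M-1},1)$ and yield the same eigenvalues and the same hitting time $T\sim\tfrac{\pi}{2\sqrt2}\sqrt M$.
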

\begin{proof}
	We consider the star graph $G = (V,E)$ of size $M+1$ such that
	$
	V = \{u_0,\ldots,u_M\} \text{ and } E = \{(u_0,u_i)\mid 1\leq i \leq M\}.
	$
	We assume without a loss of generality that the marked edge is $(u_0,u_1)$ and that the polarity is $\forall i>1, \; \sigma(u_0,u_i) = +$. At any time step $t$, the state of the walk reads $\ket{\Phi_t} = \sum_{i=1}^M \psi_{u_0,u_i}^+(t)\ket{(u_0,u_i)}\ket{+} + \psi_{u_0,u_i}^-(t)\ket{(u_0,u_i)}\ket{-}.$	
	We first show that $\forall t\in \mathbb{N}, \; \forall i > 1, \; \psi_{u_0,u_i}^+(t) = \alpha^+ \text{ and } \psi_{u_0,u_i}^-(t) = \alpha^-.$ This greatly simplifies the way we describe the walk dynamic. Afterward we shall provide simple spectral analysis to extract the optimal hitting time $T$ and probability of success $P$. Next, we prove 
	the property $(Q_t): \quad \forall i > 1, \; \psi_{u_0,u_i}^+(t) = \alpha_t^+ \text{ and } \psi_{u_0,u_i}^-(t) = \alpha_t^-$ for all $t\in \mathbb{N}$.
	
	The initial state $\ket{\Psi_0}$ is diagonal on the basis states~:
	$$ \ket{\Phi} \gets \frac{1}{\sqrt{2|E|}}\sum_{(u,v)\in E} \ket{(u,v)}\ket{+} + \ket{(u,v)}\ket{-}.$$
	All $\left(\psi_{u_0,u_i}^\pm(0)\right)_{i\geq 1}$ are equals so the property $Q_0$ is satisfied with $\alpha_t^+ = \alpha_t^- = \frac{1}{\sqrt{2M}}$. Now, we assume that $Q_t$ is true and use the walk dynamic to show that $Q_{t+1}$ is also true. The state $\ket{\Psi_{t+1}}$ is described in Table \ref{tab:star} and show that $Q_{t+1}$ is true.
	
	\begin{table}[h]
		\caption{Detailed dynamic of the quantum walk for star graphs.}
		\label{tab:star}
		\begin{tabular}{l|cccl}
			& Step $t$ & After oracle & After coin & After Scattering \\
			\hline
			& & & & \\
			$\alpha^+$ & $\alpha^+_t$ & $\alpha^+_t$ & $\alpha^-_t$ & $\displaystyle\alpha^+_{t+1} = \frac{(M-2)\alpha_t^- - 2\psi_{u_0,u_1}^+}{M}(t)$ \\
			& & & & \\
			$\alpha^-$ & $\alpha^-_t$ & $\alpha^-_t$ & $\alpha^+_t$ & $\alpha^-_{t+1}=\alpha^+_t$ \\
			& & & & \\
			$\psi_{u_0,u_1}^+$ & $\psi_{u_0,u_1}^+(t)$ & $-\psi_{u_0,u_1}^-(t)$ & $-\psi_{u_0,u_1}^+(t)$ & $\displaystyle\psi_{u_0,u_1}^+(t+1) = \frac{(2M-2)\alpha_t^- +(M-2)\psi_{u_0,u_1}^+}{M}(t)$ \\
			& & & & \\
			$\psi_{u_0,u_1}^-$ & $\psi_{u_0,u_1}^-(t)$ & $-\psi_{u_0,u_1}^+(t)$ & $-\psi_{u_0,u_1}^-(t)$ & $\psi_{u_0,u_1}^-(t+1)=-\psi_{u_0,u_1}^-(t)$ \\
		\end{tabular}
	\end{table}
	
	Using the recurrence in Table \ref{tab:star}, we can put the dynamic of the walk into a matrix form.
	$$
	\psi_{u_0,u_1}^-(t) = \frac{(-1)}{\sqrt{2M}} \qquad \text{and} \qquad X_{t+1} = AX_t,
	$$
	where
	$$
	X_t = \mat{\alpha^+_t\\\alpha_t^-\\\psi_{u_0,u_1}^+(t)\\}\qquad \text{and} \qquad A = \mat{
		0 & \frac{M-2}{M} & \frac{-2}{M}\\
		1 & 0 & 0\\
		0 & 2\frac{M-1}{M} & \frac{M-2}{M}\\
	}.
	$$
	
	After diagonalizing $A$, the eigenvalues are $\{-1,e^{i\lambda},e^{-i\lambda}\}$, where $e^{i\lambda} = \frac{M-1+i\sqrt{2M-1}}{M}$. This leads to
	$
	\psi_{u_0,u_1}^+(t) \sim \frac{i}{2}\left(e^{-i\lambda t} - e^{i\lambda t}\right) \sim \sin(\lambda t)$, which allows us to deduce the probability $p_t$ of hitting the marked edge
	$
	p_t \sim \sin^2(\lambda t),
	$
	and then the optimal hitting time by solving $p_T = 1$:
	$$
	T = \frac{\pi}{2} \frac{1}{\lambda} \sim \frac{\pi}{2} \sqrt{\frac{M}{2}} \sim \frac{\pi}{2\sqrt{2}} \sqrt{M} = O(\sqrt{M}).
	$$
	Finally we have a probability of success $P\sim 1$ for the optimal hitting time $\displaystyle T \sim \frac{\pi}{2\sqrt{2}} \sqrt{M} = O(\sqrt{M})$.
	
\end{proof}

\subsection{Searching Nodes}
The searching quantum walk presented in the previous section can search one marked edge in a graph. In order to search a node instead, we need to transform the graph we walk on. We call this transformation \textsc{starify}.

\begin{definition}{\textsc{Starify}}\\
	Let us consider an undirected graph $G = (V,E)$. The transformation \textsc{starify} on $G$ returns a graph $\tilde G$ for which
	\begin{itemize}
		\item every node and edge in $G$ is in $\tilde G$ (we call them real nodes and real edges).
		\item for every node $u\in V$, we add a node $\tilde u$ (we call these new nodes virtual nodes).
		\item for every node $u\in V$, we add the edge $(u,\tilde u)$ (we call these new edges virtual edges).
	\end{itemize}
	We call the resulting graph $\tilde G$ the starified graph of $G$.
\end{definition}

\paragraph{Searching nodes} Starifying a graph $G$ allows us to search a marked node $u$ instead of an edge. We can then use the previous searching walk to search the virtual edge and then deduce without ambiguity the marked node $u$ of the initial graph $G$. This procedure implies that we have to increase the size of the graph (number of edges and nodes). In particular, increasing the number of edges is significant since we increase the dimension of the walker (this can be significant for sparse graphs) and one must be careful of this when computing the complexity. A reassuring result is that searching a node on the complete graph (the strongest possible connectivity) has optimal complexity. As stated in Theorem \ref{th:complete}, the complexity is $O(\sqrt{M})$, which is optimal when searching over the edges. Compared to a classical algorithm in $O(M)$ (a depth first search for instance), we do have a quadratic speedup.

\begin{theorem}\label{th:complete}
	When using the quantum algorithm to search one marked node in the starified graph $\tilde G$ of the complete graph $G$ of size $N$, the probability of success is $P \sim 1$ and the hitting time $T \sim \frac{\pi}{4} N$.
\end{theorem}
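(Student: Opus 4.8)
The plan is to reuse, and scale up, the strategy of the proof of Theorem~\ref{th:star}: exploit the symmetry of the marked configuration to collapse the walk onto a constant–dimensional invariant subspace, extract a small transfer matrix, and diagonalise it. Write $V=\{u_1,\dots,u_N\}$ for the real nodes of $K_N$ and $\tilde u_1,\dots,\tilde u_N$ for the virtual ones, so $\tilde G$ has $2N$ nodes and $M=\binom N2+N$ edges; without loss of generality the marked node is $u_1$, i.e. the oracle marks the virtual edge $(u_1,\tilde u_1)$. The group $S_{N-1}$ permuting $\{u_2,\dots,u_N\}$ together with the corresponding $\tilde u_i$ fixes the marked edge, the diagonal initial state of Algorithm~\ref{algo:search} is $S_{N-1}$-invariant, and $S\times(I\otimes C)\times\mathcal O_f$ commutes with this group, so the whole trajectory stays in the symmetric subspace. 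Choosing the polarities to respect $S_{N-1}$ on the virtual edges and on the spokes $(u_1,u_i)$ (a free per-edge choice), this subspace is spanned by at most seven collective amplitudes: the two amplitudes of the marked edge $(u_1,\tilde u_1)$, the two common amplitudes of the virtual edges $(u_i,\tilde u_i)_{i\ge2}$, the two common amplitudes of the spokes $(u_1,u_i)_{i\ge2}$, and the single common amplitude of the inner edges $(u_i,u_j)_{2\le i<j}$ (a single class because the transposition $(i\,j)$ swaps the two endpoints of that edge).

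Next I would write out one step of Algorithm~\ref{algo:search} on these seven variables. With $C=X$ and $R=-X$: on the marked edge the oracle followed by the coin acts as $-I$; on every other edge the coin is the flip-flop; the scattering is the Grover diffusion $D_N$ at each real node (all real nodes have degree $N$) and is trivial ($D_1=I$) at the degree-$1$ virtual nodes. Collecting, via $D_N$, the amplitudes around $u_1$ (one marked amplitude and $N-1$ spokes) and around each $u_i$, $i\ge 2$ (one spoke, $N-2$ inner edges, one virtual edge), and writing $\epsilon=2/N$, one gets an explicit transfer matrix $A=A(N)$ with entries in $\{O(1),O(\epsilon)\}$ and $X_{t+1}=AX_t$ — the exact analogue of the $3\times3$ matrix in the star-graph proof, now of size (at most) $7$. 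One checks immediately that the $\tilde u_1$-side amplitude of the marked edge decouples, obeying $\psi_{t+1}=-\psi_t$, and — being pinned by the initial state to modulus $1/\sqrt{2M}$ — it contributes only $O(1/M)$ to the success probability; so $A$ is effectively $6\times6$.

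The last step is the spectral analysis. As $N\to\infty$ the matrix $A(N)$ degenerates, and under the $O(\epsilon)$ perturbation a unit eigenvalue splits into a conjugate pair $e^{\pm i\lambda}$ with $\lambda\sim 2/N$, while the remaining eigenvalues stay bounded away from $1$ (or are the trivial $\pm1$ of decoupled fast modes). Expanding $X_0$ in the eigenbasis and keeping the $e^{\pm i\lambda}$ part for the envelope, the amplitude on the marked edge behaves as $\psi^{+}_{u_1,\tilde u_1}(t)\sim\tfrac i2\big(e^{-i\lambda t}-e^{i\lambda t}\big)\sim\sin(\lambda t)$, the prefactor being fixed by unitarity together with $|\psi^{+}_{u_1,\tilde u_1}(0)|^2=1/2M\to0$; hence the probability of measuring the marked edge is $p_t\sim\sin^2(\lambda t)$. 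Solving $p_T=1$ gives the optimal hitting time $T=\tfrac\pi{2\lambda}\sim\tfrac\pi4 N$ and success probability $P\sim1$; since $\tfrac\pi4 N=\Theta(\sqrt M)$ this meets the $O(\sqrt{|E|})$ bound, and Algorithm~\ref{algo:guaranteed_search} inherits expected complexity $O(T/P)=O(\sqrt M)$. (As a sanity check, these leading constants coincide with those of a star graph on $M=\binom N2+N$ edges, which is exactly what Theorem~\ref{th:star} would predict; this could also suggest a slicker route where the four ``reservoir'' variables are shown to only renormalise the marked-edge coupling, reducing $A$ to an effective $3\times3$ star matrix.)

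The main obstacle is precisely this spectral analysis: the $6\times6$ transfer matrix is far heavier than the $3\times3$ star case, and one must (i) isolate the single slow eigenvalue $e^{\pm i\lambda}$ and pin down $\lambda\sim 2/N$ to leading order, (ii) show every other eigenvalue is bounded away from $1$, so those modes contribute only bounded fast oscillations to $p_t$, and (iii) translate the eigen-decomposition back into a genuine probability, which requires carrying the diagonal weight matrix because the collective-amplitude basis is \emph{not} orthonormal (each collective variable represents a different number of microscopic amplitudes). Getting the constant $\tfrac\pi4$ right — equivalently $\lambda\sim 2/N$ — is the crux; the structural resemblance to the star graph (a pendant marked edge hanging off a high-degree hub) is what keeps the computation manageable.
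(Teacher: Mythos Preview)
Your proposal is correct and follows the same strategy the paper sketches: exploit the $S_{N-1}$ symmetry of the marked configuration to collapse the walk onto a constant-dimensional invariant subspace, then analyse the resulting low-dimensional recursion. The paper's proof is in fact only a two-line sketch (``reduce by symmetry, then solve numerically''), so you have supplied considerably more detail than the paper itself; the one minor divergence is that the paper explicitly says it solves the reduced system \emph{numerically} to extract the asymptotics, whereas you aim for an analytical perturbative diagonalisation of the $6\times6$ transfer matrix---your sanity check against the star-graph constant $\tfrac{\pi}{2\sqrt 2}\sqrt{M}$ with $M=\binom N2+N$ is a nice confirmation that $\lambda\sim 2/N$ is indeed the right answer.
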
 
\begin{proof}
	The proof is very similar to the one of Theorem \ref{th:star}. We show that several edges have the same states to reduce the dynamic to a simple recursive equation. We then solve it numerically and derive asymptotic values for the probability of success.
\end{proof}

\subsection{Numerical experiments}
This section presents numerical results when searching nodes with our model on various graphs: grids, hypercubes, complete graphs and random scale-free graphs. The codes to reproduce the figures in this section is available on a github\footnote{https://github.com/mroget/code\_paper\_QWDistrib}.

\paragraph{Searching Searching nodes in a grid}
In this section we consider the 2-dimensional square grid. Figure \ref{fig:grid} shows the probability of success and hitting time of the searching algorithm in function of $\tilde M$ for several grid sizes. For a grid of size $\sqrt{N}\times \sqrt{N}$, it holds that $\tilde M = 3N$. Linear fit are displayed in Figure \ref{fig:grid} and show numerically that the probability of success $P = O(\frac{1}{\ln \tilde M})$ and the hitting time $T = O(\sqrt{\tilde M \ln \tilde M})$. Since $\tilde M$ and $N$ differ only by a coefficient, the final complexity of the algorithm shall be $O\left(\sqrt{N\log^\frac{3}{2} N}\right)$.

\begin{figure}
	\centering
	\stepcounter{figure}
	\begin{minipage}[t]{0.45\columnwidth}
		\begin{minipage}[c]{\columnwidth}
			\includegraphics[width=\linewidth]{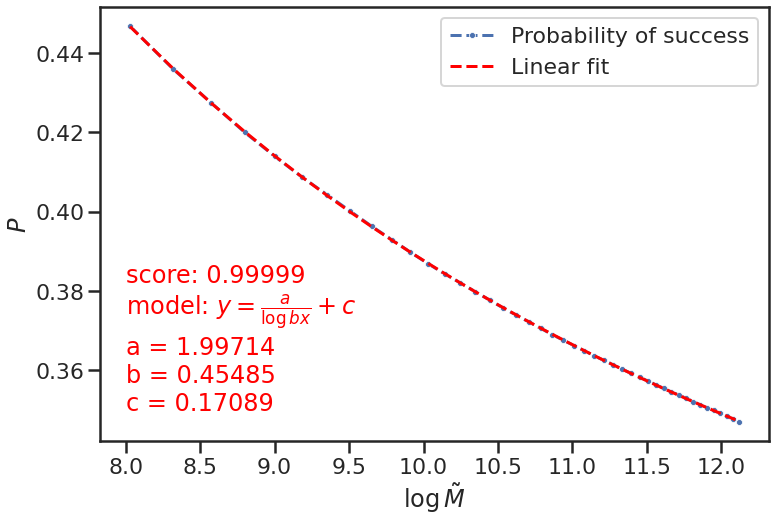}\par
		\end{minipage}
		\Scaption{Probability of success}
	\end{minipage}
	\hfill
	\begin{minipage}[t]{0.45\columnwidth}
		\begin{minipage}[c]{\columnwidth}
			\includegraphics[width=\linewidth]{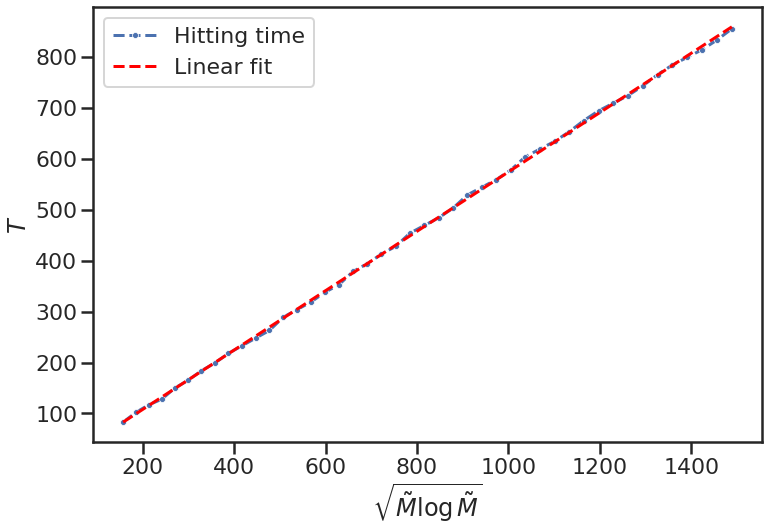}
		\end{minipage}
		\Scaption{Hitting time}
	\end{minipage}
	\addtocounter{figure}{-1}
	\caption{Searching performances for a grid of size $\sqrt{N}\times\sqrt{N}$ when $N$ increases. Linear fits are displayed to exhibit the asymptotic behavior of the searching algorithm. We recall that $\tilde M$ is the number of edges of starified graph.}
	\label{fig:grid}
\end{figure}

\paragraph{Application: Searching nodes in a hypercube}
In this section we consider the hypercube of dimension $d$. Figure \ref{fig:hypercube} shows the probability of success and hitting time of the searching algorithm in function of $\tilde M$ for several values of $d$. It holds that $\tilde M = 2^{d-1}(2+d) = O(N\ln N)$.
Linear fit are displayed in Figure \ref{fig:hypercube} and show numerically that the probability of success $P = \Omega(1)$ and the hitting time $T = O(\sqrt{\tilde M})$. Since $\tilde M = O(N\ln N)$ , the final complexity of the algorithm shall be $O\left(\sqrt{N\log^\star N}\right)$. Note the the quantum walk is walking on the $\tilde M$  edges of the starified graph. Having a complexity of $O(\tilde M)$ means being optimal (in the sense that we do as well as Grover and can't do any better by searching on edges). However, focusing on the original problem of finding a node in the original graph, the complexity becomes slightly worse than Grover.

\begin{figure}
	\centering
	\stepcounter{figure}
	\begin{minipage}[t]{0.45\columnwidth}
		\begin{minipage}[c]{\columnwidth}
			\includegraphics[width=\linewidth]{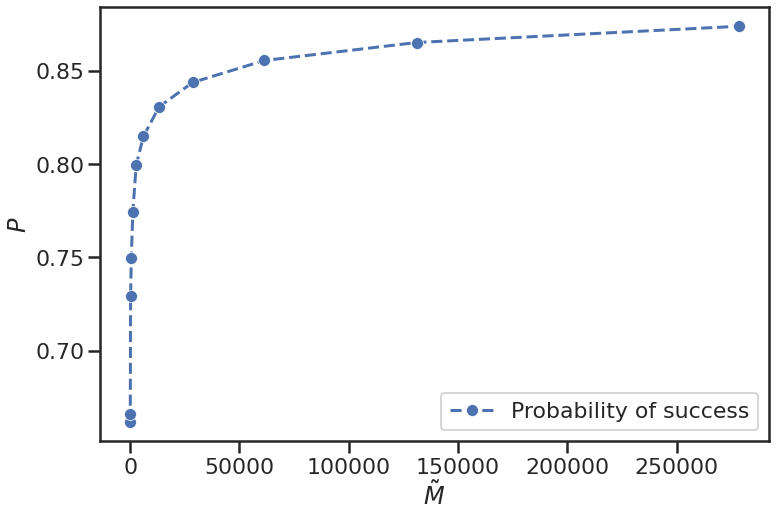}\par
		\end{minipage}
		\Scaption{Probability of success}
	\end{minipage}
	\hfill
	\begin{minipage}[t]{0.45\columnwidth}
		\begin{minipage}[c]{\columnwidth}
			\includegraphics[width=\linewidth]{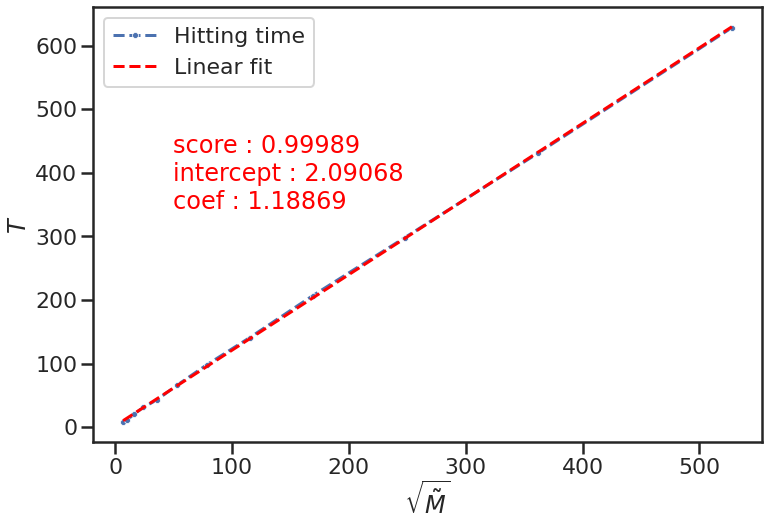}
		\end{minipage}
		\Scaption{Hitting time}
	\end{minipage}
	\addtocounter{figure}{-1}
	\caption{Searching performances for a hypercube of size $N=2^d$ when $N$ increases. Linear fits are displayed to exhibit the asymptotic behavior of the searching algorithm. We recall that $\tilde M$ is the number of edges of starified graph.}
	\label{fig:hypercube}
\end{figure} 

\paragraph{Application: Searching nodes in a complete graph}
In this section we consider the complete graph of size $N$. Figure \ref{fig:complete} shows the probability of success and hitting time of the searching algorithm in function of $\tilde M$ for several values of $N$. It holds that $\tilde M = N + N^2/4$.
Linear fit are displayed in Figure \ref{fig:complete} and show numerically that the probability of success $P = \Omega(1)$ and the hitting time $T = O(\sqrt{\tilde M})$. Since $\tilde M = O(N^2)$ , the final complexity of the algorithm shall be $O\left(N\right)$. Similarly to the hypercube, the QWSearch is optimal.

\begin{figure}
	\centering
	\stepcounter{figure}
	\begin{minipage}[t]{0.45\columnwidth}
		\begin{minipage}[c]{\columnwidth}
			\includegraphics[width=\linewidth]{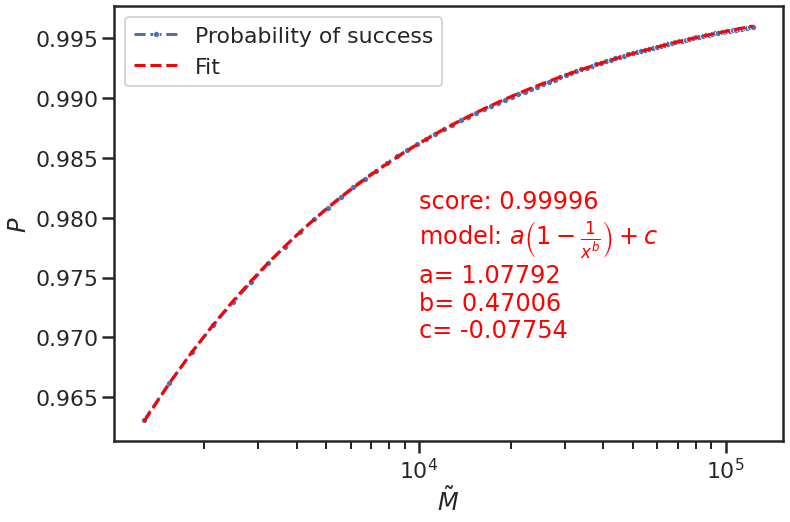}\par
		\end{minipage}
		\Scaption{Probability of success}
	\end{minipage}
	\hfill
	\begin{minipage}[t]{0.45\columnwidth}
		\begin{minipage}[c]{\columnwidth}
			\includegraphics[width=\linewidth]{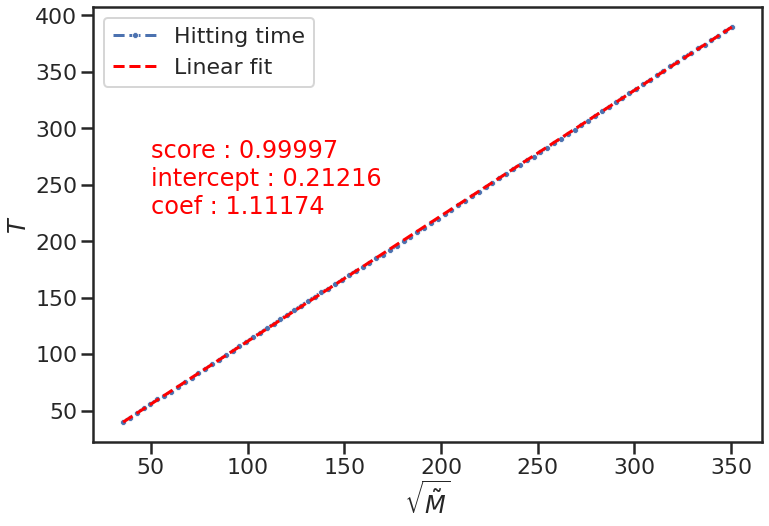}
		\end{minipage}
		\Scaption{Hitting time}
	\end{minipage}
	\addtocounter{figure}{-1}
	\caption{Searching performances for a complete graph of size $N$ when $N$ increases. Linear fits are displayed to exhibit the asymptotic behavior of the searching algorithm. We recall that $\tilde M$ is the number of edges of starified graph.}
	\label{fig:complete}
\end{figure}

\paragraph{Searching nodes in random scale-free graphs}
In this section we focus on random scale-free graphs. Figure \ref{fig:scalefree} shows the probability of success and hitting time for every nodes of one random scale-free graph generated with the Barabási–Albert algorithm\cite{barabasi1999emergence}. The fit suggests the following relations:
$$
P = 1 - \frac{1}{1+ \frac{3}{4} N \tan \frac{2d}{\pi(N-1)}} 
\qquad  \text{and} \qquad 
T^2 \propto \frac{N+M}{P},
$$
where $d$ is the degree of the node marked. To confirm these relations, we randomly generated 1000 scale-free bipartite graphs (the parameters $n,m$ have been randomly sampled between $100$ and $200$ for $n$ and between $5n$ and $n^2/2$ for $m$). We then look at the mean square error and the coefficient of determination obtained when predicting the probability of success and hitting time for every nodes of every graphs. These results are presented in Table \ref{tab:fit}.
\begin{table}[h]
	\centering
	\begin{tabular}{c|cc}
		& \textbf{mse}  & \textbf{score} \\
		\hline
		$T$ & $0.5237187281594099$ & $0.9985740735579267$\\
		$P$ & $2.5618806822927392e-06$ & $0.9986901398080446$\\
	\end{tabular}
	\caption{Mean Square Error (mse) and coefficient of determination (score) when trying to fit the probability of success and hitting time.}
	\label{tab:fit}
\end{table}

\begin{figure}
	\centering
	\stepcounter{figure}
	\begin{minipage}[t]{0.4\columnwidth}
		\begin{minipage}[c]{\columnwidth}
			\includegraphics[width=\linewidth]{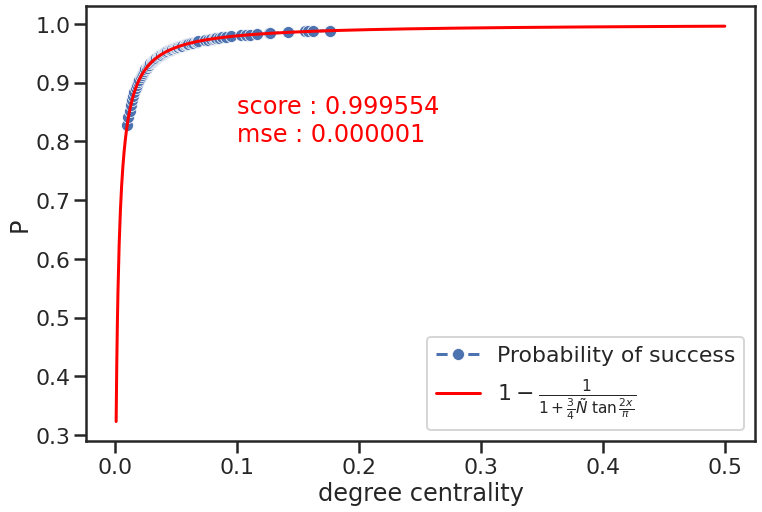}\par
		\end{minipage}
		\Scaption{Probability of success}
	\end{minipage}
	\hfill
	\begin{minipage}[t]{0.4\columnwidth}
		\begin{minipage}[c]{\columnwidth}
			\includegraphics[width=\linewidth]{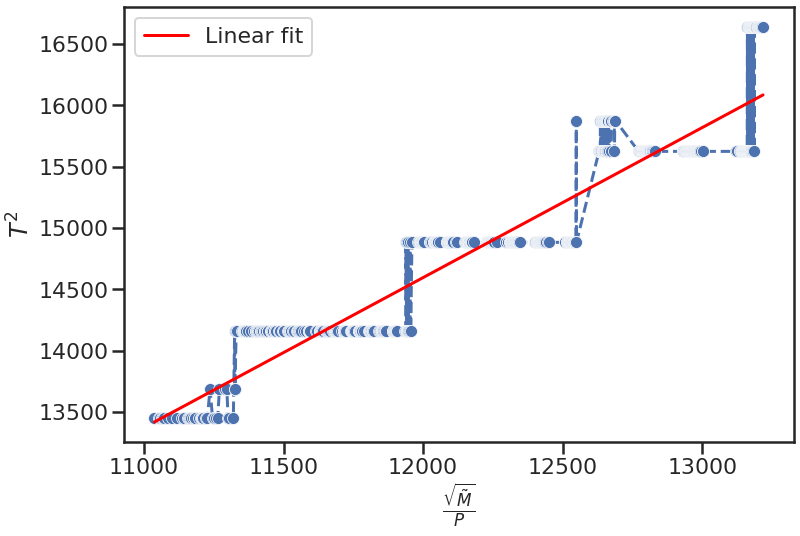}
		\end{minipage}
		\Scaption{Hitting time}
	\end{minipage}
	\addtocounter{figure}{-1}
	\caption{Searching performances for a random scale-free graph of size $1000$ with around $9900$ edges. Fits are displayed to exhibit the asymptotic behavior of the searching algorithm. We recall that $\tilde M$ is the number of edges of starified graph.}
	\label{fig:scalefree}
\end{figure}

	\section*{Conclusion}\label{ccl}
	
	In this paper, we presented a new distributed model of quantum computation inspired by quantum cellular automata. This model preserves the nice properties that QCA have, have, such as locality or translation invariance. Furthermore, this model is defined on arbitrary topologies. A network of qubits is established, with a register of two qubits per edge and a multiple qubits register per node. Communications, in this model, are counted as the number of two-qubits gates applied. Moreover, two connectivities are considered for the qubits of one node: all-to-all and cycle. As an application, we show how this model can simulate the dynamic of a quantum walk, as in~\cite{roget2024quantum}. This is a natural choice as a quantum walk formally coincides with the single particle sector of a QCA. Two different protocols reproducing the quantum walk dynamic are considered, one for each node registers' connectivities. Obviously, all-to-all connectivity has a smaller communication cost than cycle and always fits the CONGEST model of communication. However, even cycle connectivity fits the CONGEST model for several topologies like grids or hypercubes.
	
	Finally, we present an application of this quantum walk to the searching problem. We show how the dynamic of the quantum walk can be used to search a node and an edge on several graphs. Numerical results show a quadratic improvement of the complexity for almost all graphs tested. While these numerical results are for the quantum walk's dynamic, it can be translated to a distributed framework since the distributed scheme from Section \ref{sec:distrib} can perfectly reproduce said dynamic.
	
	To go further, one could consider other dynamics besides the quantum walk. Indeed, while our model of computation has been inspired by quantum walks and implements them well, it is more general than that. Further works could also been done on exploiting the searching quantum walk in a distributed framework. 
	
	\paragraph{Aknowledgement}
	This work is supported by the PEPR EPiQ ANR-22-PETQ-0007, by the ANR JCJC DisQC ANR-22-CE47-0002-01.
	
	\bibliography{sn-bibliography}


\begin{thebibliography}{18}
\ifx \bisbn   \undefined \def \bisbn  #1{ISBN #1}\fi
\ifx \binits  \undefined \def \binits#1{#1}\fi
\ifx \bauthor  \undefined \def \bauthor#1{#1}\fi
\ifx \batitle  \undefined \def \batitle#1{#1}\fi
\ifx \bjtitle  \undefined \def \bjtitle#1{#1}\fi
\ifx \bvolume  \undefined \def \bvolume#1{\textbf{#1}}\fi
\ifx \byear  \undefined \def \byear#1{#1}\fi
\ifx \bissue  \undefined \def \bissue#1{#1}\fi
\ifx \bfpage  \undefined \def \bfpage#1{#1}\fi
\ifx \blpage  \undefined \def \blpage #1{#1}\fi
\ifx \burl  \undefined \def \burl#1{\textsf{#1}}\fi
\ifx \doiurl  \undefined \def \doiurl#1{\url{https://doi.org/#1}}\fi
\ifx \betal  \undefined \def \betal{\textit{et al.}}\fi
\ifx \binstitute  \undefined \def \binstitute#1{#1}\fi
\ifx \binstitutionaled  \undefined \def \binstitutionaled#1{#1}\fi
\ifx \bctitle  \undefined \def \bctitle#1{#1}\fi
\ifx \beditor  \undefined \def \beditor#1{#1}\fi
\ifx \bpublisher  \undefined \def \bpublisher#1{#1}\fi
\ifx \bbtitle  \undefined \def \bbtitle#1{#1}\fi
\ifx \bedition  \undefined \def \bedition#1{#1}\fi
\ifx \bseriesno  \undefined \def \bseriesno#1{#1}\fi
\ifx \blocation  \undefined \def \blocation#1{#1}\fi
\ifx \bsertitle  \undefined \def \bsertitle#1{#1}\fi
\ifx \bsnm \undefined \def \bsnm#1{#1}\fi
\ifx \bsuffix \undefined \def \bsuffix#1{#1}\fi
\ifx \bparticle \undefined \def \bparticle#1{#1}\fi
\ifx \barticle \undefined \def \barticle#1{#1}\fi
\bibcommenthead
\ifx \bconfdate \undefined \def \bconfdate #1{#1}\fi
\ifx \botherref \undefined \def \botherref #1{#1}\fi
\ifx \url \undefined \def \url#1{\textsf{#1}}\fi
\ifx \bchapter \undefined \def \bchapter#1{#1}\fi
\ifx \bbook \undefined \def \bbook#1{#1}\fi
\ifx \bcomment \undefined \def \bcomment#1{#1}\fi
\ifx \oauthor \undefined \def \oauthor#1{#1}\fi
\ifx \citeauthoryear \undefined \def \citeauthoryear#1{#1}\fi
\ifx \endbibitem  \undefined \def \endbibitem {}\fi
\ifx \bconflocation  \undefined \def \bconflocation#1{#1}\fi
\ifx \arxivurl  \undefined \def \arxivurl#1{\textsf{#1}}\fi
\csname PreBibitemsHook\endcsname

\bibitem[\protect\citeauthoryear{Arrighi}{2019}]{arrighi2019overview}
\begin{barticle}
\bauthor{\bsnm{Arrighi}, \binits{P.}}:
\batitle{An overview of quantum cellular automata}.
\bjtitle{Natural Computing}
\bvolume{18}(\bissue{4}),
\bfpage{885}--\blpage{899}
(\byear{2019})
\end{barticle}
\endbibitem

\bibitem[\protect\citeauthoryear{Di~Molfetta et~al.}{2013}]{di2013quantum}
\begin{barticle}
\bauthor{\bsnm{Di~Molfetta}, \binits{G.}},
\bauthor{\bsnm{Brachet}, \binits{M.}},
\bauthor{\bsnm{Debbasch}, \binits{F.}}:
\batitle{Quantum walks as massless dirac fermions in curved space-time}.
\bjtitle{Physical Review A}
\bvolume{88}(\bissue{4}),
\bfpage{042301}
(\byear{2013})
\end{barticle}
\endbibitem

\bibitem[\protect\citeauthoryear{Bisio et~al.}{2016}]{bisio2016quantum}
\begin{barticle}
\bauthor{\bsnm{Bisio}, \binits{A.}},
\bauthor{\bsnm{D'Ariano}, \binits{G.M.}},
\bauthor{\bsnm{Perinotti}, \binits{P.}}:
\batitle{Quantum walks, deformed relativity and hopf algebra symmetries}.
\bjtitle{Philosophical Transactions of the Royal Society A: Mathematical,
  Physical and Engineering Sciences}
\bvolume{374}(\bissue{2068}),
\bfpage{20150232}
(\byear{2016})
\end{barticle}
\endbibitem

\bibitem[\protect\citeauthoryear{Santha}{2008}]{santha2008quantum}
\begin{bchapter}
\bauthor{\bsnm{Santha}, \binits{M.}}:
\bctitle{Quantum walk based search algorithms}.
In: \bbtitle{International Conference on Theory and Applications of Models of
  Computation},
pp. \bfpage{31}--\blpage{46}
(\byear{2008}).
\bcomment{Springer}
\end{bchapter}
\endbibitem

\bibitem[\protect\citeauthoryear{Slate et~al.}{2021}]{slate2021quantum}
\begin{barticle}
\bauthor{\bsnm{Slate}, \binits{N.}},
\bauthor{\bsnm{Matwiejew}, \binits{E.}},
\bauthor{\bsnm{Marsh}, \binits{S.}},
\bauthor{\bsnm{Wang}, \binits{J.B.}}:
\batitle{Quantum walk-based portfolio optimisation}.
\bjtitle{Quantum}
\bvolume{5},
\bfpage{513}
(\byear{2021})
\end{barticle}
\endbibitem

\bibitem[\protect\citeauthoryear{Melnikov
  et~al.}{2019}]{melnikov2019predicting}
\begin{barticle}
\bauthor{\bsnm{Melnikov}, \binits{A.A.}},
\bauthor{\bsnm{Fedichkin}, \binits{L.E.}},
\bauthor{\bsnm{Alodjants}, \binits{A.}}:
\batitle{Predicting quantum advantage by quantum walk with convolutional neural
  networks}.
\bjtitle{New Journal of Physics}
\bvolume{21}(\bissue{12}),
\bfpage{125002}
(\byear{2019})
\end{barticle}
\endbibitem

\bibitem[\protect\citeauthoryear{Gall et~al.}{2018}]{gall2018quantum}
\begin{botherref}
\oauthor{\bsnm{Gall}, \binits{F.L.}},
\oauthor{\bsnm{Nishimura}, \binits{H.}},
\oauthor{\bsnm{Rosmanis}, \binits{A.}}:
Quantum advantage for the local model in distributed computing.
arXiv preprint arXiv:1810.10838
(2018)
\end{botherref}
\endbibitem

\bibitem[\protect\citeauthoryear{Le~Gall and Magniez}{2018}]{le2018sublinear}
\begin{bchapter}
\bauthor{\bsnm{Le~Gall}, \binits{F.}},
\bauthor{\bsnm{Magniez}, \binits{F.}}:
\bctitle{Sublinear-time quantum computation of the diameter in congest
  networks}.
In: \bbtitle{Proceedings of the 2018 ACM Symposium on Principles of Distributed
  Computing},
pp. \bfpage{337}--\blpage{346}
(\byear{2018})
\end{bchapter}
\endbibitem

\bibitem[\protect\citeauthoryear{Izumi et~al.}{2019}]{izumi2019quantum}
\begin{botherref}
\oauthor{\bsnm{Izumi}, \binits{T.}},
\oauthor{\bsnm{Gall}, \binits{F.L.}},
\oauthor{\bsnm{Magniez}, \binits{F.}}:
Quantum distributed algorithm for triangle finding in the congest model.
arXiv preprint arXiv:1908.11488
(2019)
\end{botherref}
\endbibitem

\bibitem[\protect\citeauthoryear{Bezerra et~al.}{2021}]{bezerra2021quantum}
\begin{barticle}
\bauthor{\bsnm{Bezerra}, \binits{G.}},
\bauthor{\bsnm{Lug{\~a}o}, \binits{P.H.}},
\bauthor{\bsnm{Portugal}, \binits{R.}}:
\batitle{Quantum-walk-based search algorithms with multiple marked vertices}.
\bjtitle{Physical Review A}
\bvolume{103}(\bissue{6}),
\bfpage{062202}
(\byear{2021})
\end{barticle}
\endbibitem

\bibitem[\protect\citeauthoryear{Arrighi et~al.}{2018}]{arrighi2018dirac}
\begin{barticle}
\bauthor{\bsnm{Arrighi}, \binits{P.}},
\bauthor{\bsnm{Di~Molfetta}, \binits{G.}},
\bauthor{\bsnm{M{\'a}rquez-Mart{\'\i}n}, \binits{I.}},
\bauthor{\bsnm{P{\'e}rez}, \binits{A.}}:
\batitle{Dirac equation as a quantum walk over the honeycomb and triangular
  lattices}.
\bjtitle{Physical Review A}
\bvolume{97}(\bissue{6}),
\bfpage{062111}
(\byear{2018})
\end{barticle}
\endbibitem

\bibitem[\protect\citeauthoryear{Roget et~al.}{2020}]{roget2020grover}
\begin{barticle}
\bauthor{\bsnm{Roget}, \binits{M.}},
\bauthor{\bsnm{Guillet}, \binits{S.}},
\bauthor{\bsnm{Arrighi}, \binits{P.}},
\bauthor{\bsnm{Di~Molfetta}, \binits{G.}}:
\batitle{Grover search as a naturally occurring phenomenon}.
\bjtitle{Physical Review Letters}
\bvolume{124}(\bissue{18}),
\bfpage{180501}
(\byear{2020})
\end{barticle}
\endbibitem

\bibitem[\protect\citeauthoryear{Portugal}{2013}]{portugal2013quantum}
\begin{bbook}
\bauthor{\bsnm{Portugal}, \binits{R.}}:
\bbtitle{Quantum Walks and Search Algorithms}
vol. \bseriesno{19}.
\bpublisher{Springer}, \blocation{???}
(\byear{2013})
\end{bbook}
\endbibitem

\bibitem[\protect\citeauthoryear{Biswal et~al.}{2019}]{biswal2019techniques}
\begin{barticle}
\bauthor{\bsnm{Biswal}, \binits{L.}},
\bauthor{\bsnm{Bhattacharjee}, \binits{D.}},
\bauthor{\bsnm{Chattopadhyay}, \binits{A.}},
\bauthor{\bsnm{Rahaman}, \binits{H.}}:
\batitle{Techniques for fault-tolerant decomposition of a multicontrolled
  toffoli gate}.
\bjtitle{Physical Review A}
\bvolume{100}(\bissue{6}),
\bfpage{062326}
(\byear{2019})
\end{barticle}
\endbibitem

\bibitem[\protect\citeauthoryear{Nielsen and Chuang}{2010}]{nielsen2010quantum}
\begin{bbook}
\bauthor{\bsnm{Nielsen}, \binits{M.A.}},
\bauthor{\bsnm{Chuang}, \binits{I.L.}}:
\bbtitle{Quantum Computation and Quantum Information}.
\bpublisher{Cambridge university press}, \blocation{???}
(\byear{2010})
\end{bbook}
\endbibitem

\bibitem[\protect\citeauthoryear{Zalka}{1999}]{zalka1999grover}
\begin{barticle}
\bauthor{\bsnm{Zalka}, \binits{C.}}:
\batitle{Grover’s quantum searching algorithm is optimal}.
\bjtitle{Physical Review A}
\bvolume{60}(\bissue{4}),
\bfpage{2746}
(\byear{1999})
\end{barticle}
\endbibitem

\bibitem[\protect\citeauthoryear{Barab{\'a}si and
  Albert}{1999}]{barabasi1999emergence}
\begin{barticle}
\bauthor{\bsnm{Barab{\'a}si}, \binits{A.-L.}},
\bauthor{\bsnm{Albert}, \binits{R.}}:
\batitle{Emergence of scaling in random networks}.
\bjtitle{science}
\bvolume{286}(\bissue{5439}),
\bfpage{509}--\blpage{512}
(\byear{1999})
\end{barticle}
\endbibitem

\bibitem[\protect\citeauthoryear{Roget and
  Di~Molfetta}{2024}]{roget2024quantum}
\begin{bchapter}
\bauthor{\bsnm{Roget}, \binits{M.}},
\bauthor{\bsnm{Di~Molfetta}, \binits{G.}}:
\bctitle{A quantum walk-based scheme for distributed searching on arbitrary
  graphs}.
In: \bbtitle{Asian Symposium on Cellular Automata Technology},
pp. \bfpage{72}--\blpage{83}
(\byear{2024}).
\bcomment{Springer}
\end{bchapter}
\endbibitem

\end{thebibliography}
	
\end{document}